\documentclass[runningheads]{llncs}
\usepackage[T1]{fontenc}
%

\usepackage{xcolor}
\usepackage{graphicx}
\usepackage{xspace}
\usepackage{amsmath,amssymb}

\newcommand{\mw}{mw}
\newcommand{\cext}{\chi_{\text{1-ext}}}
\newcommand{\newsubsetsum}{\textsc{Generating Set}\xspace}

\usepackage{tikz}
\usetikzlibrary{shapes}
\usepackage{pgfplots}
\usepgfplotslibrary{groupplots}

\usepackage{comment}
\definecolor{lipicsYellow}{RGB}{255,211,67}
\definecolor{lipicsLightGray}{RGB}{200,200,200}
\definecolor{lipicsGray}{RGB}{128,128,128}

\usepackage[caption = false]{subfig}



\usepackage{amsthm}

\begin{document}

\title{Channel allocation revisited through 1-extendability of graphs}
%
%
\author{Anthony Busson\and
Malory Marin\and
Rémi Watrigant}
\authorrunning{A. Busson et al.}
%
\institute{Univ Lyon, CNRS, ENS de Lyon, Université Claude Bernard Lyon 1, LIP UMR5668, France}

\maketitle

\begin{abstract}
We revisit the classical problem of channel allocation for Wi-Fi access points (AP). Using mechanisms such as the CSMA/CA protocol, Wi-Fi access points which are in conflict within a same channel are still able to communicate to terminals. In graph theoretical terms, it means that it is not mandatory for the channel allocation to correspond to a proper coloring of the conflict graph. However, recent studies suggest that the structure--rather than the number--of conflicts plays a crucial role in the performance of each AP. More precisely, the graph induced by each channel must satisfy the so-called $1$-extendability property, which requires each vertex to be contained in an independent set of maximum cardinality. 
In this paper we introduce the $1$-extendable chromatic number, which is the minimum size of a partition of the vertex set of a graph such that each part induces a $1$-extendable graph. We study this parameter and the related optimization problem through different perspectives: algorithms and complexity, structure, and extremal properties.
We first show how to compute this number using modular decompositions of graphs, and analyze the running time with respect to the modular width of the input graph. We also focus on the special case of cographs, and prove that the $1$-extendable chromatic number can be computed in quasi-polynomial time in this class.
Concerning extremal results, we show that the $1$-extendable chromatic number of a graph with $n$ vertices is at most $2\sqrt{n}$, whereas the classical chromatic number can be as large as $n$. We are also able to construct graphs whose $1$-extendable chromatic number is at least logarithmic in the number of vertices.
\end{abstract}

\section{Introduction}

\subsection{CSMA/CA network and channel }

\paragraph{Wi-Fi networks.} Wi-Fi networks are one of the primary means of accessing the Internet today. A Wi-Fi network is composed of one or more access points (AP) giving Internet access to stations associated to them. Each access point uses exclusively a particular channel, which consists in a 20 MHz wide frequency band. There are currently 3 independent channels (with no common frequencies) in the 2.4GHz band and around 23 in the 5GHz band. These channels can be aggregated, enabling an AP to use a 40Mhz frequency range, thus improving throughput. 
To arbitrate transmissions between access points and stations using the same channel, the CSMA/CA protocol is used. This is a listen-before-talk mechanism that enables time-sharing of the channel between stations and APs. Time-sharing is not static/predetermined, but opportunistic.  For example, when two APs use the same channel, the number of transmissions is on average shared equally between the two APs (in an equivalent radio environment). The throughput is therefore approximately halved for each of the access points, compared with the case where they are alone on their channel. Such sharing is referred to as conflicts in the literature, and is represented by a conflict graph where the vertices are the APs and the edges indicate whether they share the channel. We consider only the DCF mode of Wi-Fi, which is the mode used in practice on commercial access points.

\paragraph{Channel allocation.} Channel allocation is the algorithm used to set channels on the various access points. There is then a conflict graph for each channel. The aim of the algorithm might be to minimize the number of edges in the conflict graphs, so as to have as little sharing as possible, or to distribute the network load over the different channels. The algorithms implemented on commercial Wi-Fi networks are variants of these two approaches. However, these approaches perform poorly when traffic on the access points become saturated (i.e. when an AP always has a frame to transmit).  Depending on the topology of the conflict graph, some APs may starve and be unable to transmit any frames at all. Indeed, the performance of each AP depends on the structure of its conflict graph. More precisely, it appears that the structure of its \textit{independent sets} plays a crucial role, where an independent set of a graph is a set of pairwise non-adjacent vertices.

\paragraph{A graph theoretical approach.}
The key metric describing the performance of a channel allocation is the proportion of accesses (number of transmissions) that each vertex of the conflict graph obtains. It is denoted by $p_v$ for a vertex $v$. Hence, $p_v=1$ if $v$ has no neighbor in the conflict graph, and $0 \leqslant p_v < 1$ otherwise. If $p_v=0$ the vertex is in starvation and cannot transmit any frame.
The first formal study characterizing $p_v$ was presented in \cite{Liew2007BackoftheEnvelopeCO}, where it was demonstrated that under saturation conditions, $p_v$ can be calculated as:
$$
p_v = \frac{\sum_{S \in \mathcal{S}(G) : v\in S}\theta^{|S|}}{\sum_{S \in \mathcal{S}(G)}\theta^{|S|}}
$$
where $\theta$ is the ratio between transmission and listen phase durations, and $\mathcal{S}(G)$ is the set of independent sets of $G$. When $\theta$ tends to infinity, $p_v$ tends to the ratio of the number of independent sets of maximum cardinality (maximum independent set, or MIS for short) that contains $v$ ($\#_v \alpha(G)$) over the total number of MIS of $G$($\# \alpha(G)$) :
$$
\lim_{\theta\rightarrow + \infty}  p_v = \frac{\#_v \alpha(G)}{\# \alpha(G)}
$$ 
To ensure efficient channel utilization, $\theta$ is often set to a large value in practice. For example, in Wi-Fi networks, typical values of $\theta$ range from 20 to 100, depending on the transmission parameters. However, with such values, a node or vertex that does not belong to any MIS will experience a very low throughput. Figure \ref{figureIntro} illustrates this phenomenon on two simple graphs.

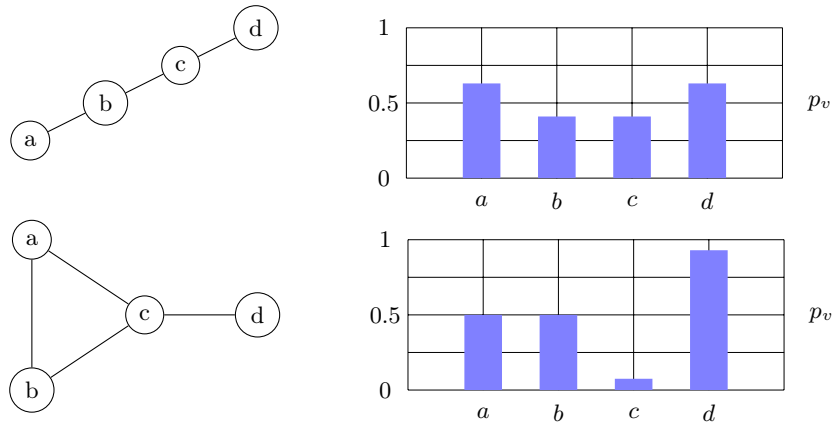
\begin{figure}[t!]
\centering
\begin{minipage}{1\linewidth}
\centering
\begin{tikzpicture}
\node[draw, circle] (a) at (0,0.5) {a};
\node[draw, circle] (b) at (1,1) {b};
\node[draw, circle] (c) at (2,1.5) {c};
\node[draw, circle] (d) at (3,2) {d};

\draw (a) -- (b) -- (c) -- (d) ;

\draw[very thin] (5,0) grid[ystep=0.5] (10,2);
\draw[ycomb, line width=5mm,color=blue!50] plot coordinates {(6,2*0.63) (7,2*0.41) (8,2*0.41) (9,2*0.63)} ;

\foreach \x in {0,0.5,1}{
	\node[] () at (4.7, \x*2) {\x};
}

\node[] () at (10.5, 1) {$p_v$};

\node[] () at (6,-0.3) {$a$};
\node[] () at (7,-0.3) {$b$};
\node[] () at (8,-0.3) {$c$};
\node[] () at (9,-0.3) {$d$};

\end{tikzpicture}
\end{minipage}
\begin{minipage}{1\linewidth}
\centering
\begin{tikzpicture}
\node[draw, circle] (a) at (0,2) {a};
\node[draw, circle] (b) at (0,0) {b};
\node[draw, circle] (c) at (1.5,1) {c};
\node[draw, circle] (d) at (3,1) {d};

\draw (a) -- (b) -- (c) -- (d) ;
\draw (a) -- (c) ;

\draw[very thin] (5,0) grid[ystep=0.5] (10,2);
\draw[ycomb, line width=5mm,color=blue!50] plot coordinates {(6,2*0.5) (7,2*0.5) (8,2*0.075) (9,2*0.93)} ;

\foreach \x in {0,0.5,1}{
	\node[] () at (4.7, \x*2) {\x};
}

\node[] () at (10.5, 1) {$p_v$};

\node[] () at (6,-0.3) {$a$};
\node[] () at (7,-0.3) {$b$};
\node[] () at (8,-0.3) {$c$};
\node[] () at (9,-0.3) {$d$};

\end{tikzpicture}
\end{minipage}
\caption{Two graphs and the $p_v$ value of each vertex, obtained with ns-3 simulations. ns-3 is a network simulator that implements the Wi-Fi protocol in the same way as the standards, and incorporates a realistic radio layer. The simulated Wi-Fi corresponds to the IEEE 802.11ax amendment with frame aggregation enabled and a physical transmission rate of 144.4Mbit/s.  The top graph is $1$-extendable and thus each vertex has a non-negligible throughput. The bottom graph is not $1$-extendable, and the middle vertex has a very low throughput.}\label{figureIntro}
\end{figure}

To guarantee that every vertex has a fair share of the channel and to prevent any node from being deprived of transmission opportunities, it is essential that every vertex belongs to an MIS. In graph theoretical terms, each channel must induce a conflict graph that is \textit{$1$-extendable}.
In the Wi-Fi context, this means that channel allocation must lead to 1-extendable conflict graphs. 
In this article, we investigate from a structural and algorithmic point of view whether such allocations are possible with a fixed number of channels.

\subsection{Definitions and Related Work}


A graph $G$ is \emph{$1$-extendable} if each vertex belongs to at least one maximum independent set (MIS). This class was initially introduced by Berge \cite{BERGE198231} under the name of B-graphs in order to study \textit{well-covered} graphs. The class of well-covered graphs was introduced by Plummer in 1970 \cite{PLUMMER197091}, and is defined to be the graphs where  every independent set that cannot be further extended is also an MIS. In other words, all (inclusion-wise) maximal independent sets have the same size. 
This property makes them particularly interesting as it guarantees that the greedy algorithm for constructing a large independent set always produces an optimal solution. In particular, a graph is well-covered if, and only if, for any $k\geqslant 1$ any independent set of size $k$ is a subset of an MIS. This last property, when $k$ is fixed, was called $k$-extendability by Dean and Zito \cite{DEAN199467}. 
More recently, 
the computational aspects of testing $1$-extendability has been studied \cite{Berge}. Specifically, it has been shown that recognizing $1$-extendable graphs is NP-hard, even for unit disk graphs, which are a common model for wireless networks.
A more involved computational complexity of recognizing $1$-extendable graphs, well-covered graphs and generalizations was also conducted in ~\cite{FeMaWa24}, where in particular it was proved that deciding whether a given graph is $1$-extendable is $\Theta_2^p$-complete, and thus belong to a complexity class beyond $NP$.\newline

For a graph $G$, we say that a partition $V_1$, $\dots$, $V_k$ of its vertex set is a \textit{1-extendable $k$-partition} if the subgraph induced by $V_i$ is $1$-extendable for every $i = 1, \dots, k$. The \textit{1-extendable chromatic number} of a graph $G$, denoted by $\cext(G)$, is the smallest integer $k$ such that $G$ admits a $1$-extendable $k$-partition. Observe that this number is bounded above by the ``classical'' chromatic number $\chi(G)$, which is the minimum integer $k$ such that $V(G)$ has a partition into $k$ independent sets (since an independent set is $1$-extendable). The main objective of this paper is to study this new parameter both structurally and algorithmically. Given a graph $G$ and an integer $k$, the \textsc{1-Extendable Partition} problem asks whether $\cext(G) \leqslant k$. In a nutshell, if $G$ represents the conflict graph of a set of Wi-Fi access points, $\cext(G)$ denotes the minimum number of channels in an assignment which guarantees that under saturation, no access point will be in starvation.\newline

One classical way of attacking a graph optimization problem is to decompose the input into simpler pieces in a way that rearranging the pieces behaves well with the considered problem. As the $1$-extendability deals with the structure of independent sets, it makes sense to consider a decomposition which preserves this structure. 
One such object is the so-called \textit{modular decomposition}, which is a recursive representation of a graph in which each piece--a subset of vertices called \textit{module}-- has a homogeneous behavior with respect to the others (a formal definition will be given in Section~\ref{sec:preliminaries}). It is important to note that every graph admits a modular decomposition, and it can be obtained in linear time~\cite{McCoSpi99}. Moreover, it is worth noting that modular decompositions were also considered for studying well-covered graphs~\cite{MiPi23}. When dealing with the modular decomposition of a graph, a natural parameter measuring its complexity is the so-called \textit{modular-width}, which informally represents the arity of the branching tree of the recursive decomposition. For a graph $G$, we denote its modular-width by $mw(G)$. 


\subsection{Contributions and organization of the paper}

In this paper we conduct an analysis of the $1$-extendable chromatic number through several different angles: complexity, algorithms, structure and extremal behavior.

In Section~\ref{sec:complexity}, we first study the computational complexity of deciding the exact value of this number for arbitrary graphs. It was recently shown that the problem of deciding whether a graph is $1$-extendable belongs to a complexity class above $NP$~\cite{FeMaWa24}. Namely, the problem is complete for $\Theta_2^p$, which is the set of problems that can be solved in polynomial time with a logarithmic number of calls to a \textsc{Sat} oracle. It is natural to expect that computing the $1$-extendable chromatic number is not easier by proving that \textsc{1-Extendable k-Partition} is $\Theta_2^p$-hard for every fixed $k\geqslant 1$.

We then turn to positive results. 
We first obtain an algorithm for deciding whether a graph is $1$-extendable which runs in single exponential in the modular-width of the input graph, and linear time in its number of vertices. We then use this algorithm to compute the $1$-extendable chromatic number is at most $k$ in time $\alpha(G)^{O(mw(G)k)}\cdot n$, where $\alpha(G)$ is the \textit{independence number} of a graph $\alpha(G)$, which corresponds to the size of an MIS of $G$.

A building block of these algorithms is the special case of \textit{cographs}, which can be seen as the simplest graphs with respect to modular decomposition. In particular, cographs are exactly the graphs with modular-width $2$, the smallest possible value. We are able to obtain an algorithm for deciding whether the $1$-extendable chromatic number of a cograph $G$ is at most $k$ in time $k \cdot \alpha(G)^{O(k)}$.
%
We also exhibit a number-theoretical formulation of the problem in the case of complete multipartite graphs, which is an even more restricted case of cographs, but seems to contain the main combinatorial complexity of the problem.

In Section~\ref{sec:extremal}, we investigate the extremal behavior of the $1$-extendable chromatic number. We first show that this number is at most $2\sqrt{n}$ in general graphs with $n$ vertices.

For the special case of cographs, we are even able to obtain a logarithmic bound, which appears to be tight, as there exist a cograph which requires a logarithmic bound. This result, together with the previously mentioned algorithm, implies an algorithm for deciding the $1$-extendable chromatic number of a cograph $G$ running in quasi-polynomial time, namely $k \cdot \alpha(G)^{O(\log_2(\alpha(G)))}$.

Finally, we conclude the paper in Section~\ref{sec:conclusion} with several open questions.\newline

\section{Preliminaries}\label{sec:preliminaries}

Most definitions used in this paper are standard. We nevertheless recall those which are heavily used. For a non-negative integer $n$, let $[n] = \{1, \dots n\}$. Given a graph $G$, $V(G)$ denotes its vertex set and $E(G)$ its edge set. Given $R\subseteq V(G)$, we use $G[R]$ to denote the subgraph of $G$ induced by $R$, and $G-R = G[V(G)-R]$ to denote the subgraph of $G$ induced by the complement of $R$.
Unless otherwise stated, $n$ always denotes the number of vertices of the considered graph.
The \textit{disjoint union} of two graphs $G_1$ and $G_2$ is the graph denoted by $G_1 \cup G_2$ whose vertex set is $V(G_1) \cup V(G_2)$ and edge set is $E(G_1) \cup E(G_2)$. 
The \textit{complete sum} of two graphs $G_1$ and $G_2$ is the graph denoted by $G_1 + G_2$ whose vertex set is $V(G_1) \cup V(G_2)$ and edge set is $E(G_1) \cup E(G_2) \cup \{\{x, y\}: x \in V(G_1), y \in V(G_2)\}$.
Given a graph $H$ with $V(H) = \{v_1, \dots, v_n\}$ and $n$ graphs $G_1$, $\dots$, $G_n$, the substitution operation of $(G_1, \dots, G_n)$ on $H$ is the graph whose vertex set is $\cup_{i=1}^n V(G_i)$, and edge set $\left( \cup_{i=1}^n E(G_i) \right) \cup \left( \cup_{\{v_i, v_j\} \in E(H)} \{\{x, y\}, x \in V(G_i), y \in V(G_j)\} \right)$.
We say that $H$ is the \textit{representative graph} of this new graph.
Notice that the disjoint union (resp. complete sum) operation corresponds to the substitution operation where $H$ is the graph composed of two non-adjacent (resp. adjacent) vertices.
Any graph can be recursively constructed from the graph with one vertex using only substitution operations, and the \textit{width} of such a decomposition is the maximum order of the graphs $H$ used along the process. The \textit{modular-width} of a graph $G$, denoted by $\mw(G)$ is the minimum width of such a decomposition. It is known that a decomposition of minimum width is unique. It  is called the \textit{modular decomposition} of $G$, and can be found in linear time~\cite{Habib}. 
The modular decomposition can naturally be represented as a rooted tree, where each leaf represents the introduction of a new single vertex, and each internal node is labeled by the graph $H$ on which the substitution is done.
Cographs are exactly the graphs of modular-width at most two, that is, graphs recursively obtained from a single vertex using only disjoint unions and complete sums. The corresponding tree representation of a cograph is called its \textit{cotree}. 
A \textit{module} of a graph $G$ is a set of vertices $M$ such that every vertex $v \in V(G) \setminus M$ is either adjacent to all vertices of $M$ or not adjacent to any vertex of $M$, and it is a \textit{strong module} if for any other module $M' \neq M$ the intersection $M \cap M'$ is either empty or equals $M$ or $M'$. Observe that in the substitution of $(G_1, \dots, G_n)$ on $H$, $V(G_i)$ forms a module in the new graph. In a modular decomposition, such modules are always strong modules.
A graph  $G$ is a \textit{complete multipartite graph} if there exists a partition of $V(G)$ into subsets $V_1, \cdots ,V_q$ such that $\{x, y\}$ is an edge if and only if $x \in V_i$ and $x \in V_j$ with $i \neq j$. Observe that a complete multipartite graph is a cograph. A complete multipartite graph is called \textit{balanced} if $|V_1|=...=|V_q| = \ell$, and we call $\ell$ the \textit{width} of $G$ and $q$ its \textit{length}.
A \textit{Fixed Parameter Tractable} (FPT) algorithm is an algorithm deciding whether an instance of a decision problem is positive in time $f(k)\cdot n^{c}$, where $c$ is some fixed constant, $f$ is a computable function, $n$ is the size of the instance, and $k$ is some chosen parameter of it. If $c=1$ (resp. $c=2$) we say this is a \textit{linear} (resp. \textit{quadratic}) FPT algorithm For more details, see Cygan \textit{et al.} \cite{Cygan}.

\section{Complexity and algorithms} \label{sec:complexity}

Deciding if a graph is $1$-extendable was shown to be NP-hard in \cite{Berge}, even when restricted to subcubic planar graphs and unit disk graphs. Note that the problem is not in NP; it is, in fact, complete for the class $\Theta_2^p = P^{NP[\log]}$, which is the set of problems that can be solved in polynomial time with a logarithmic number of calls to a \textsc{Sat} oracle. This class is a subset of $\Sigma_2^p = P^{NP}$ in the polynomial hierarchy. We exhibit a reduction from testing $1$-extendability to \textsc{1-Extendability Partition}, and as a corollary, show that the problem is $\Theta_2^p$-hard, which is stronger than simple NP-hardness.

\begin{theorem}\label{thm:hardness}
\textsc{1-Extendable $k$-Partition} is $\Theta_2^p$-hard for every~fixed $k~\geqslant~1$.
\end{theorem}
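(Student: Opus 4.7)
The plan is to reduce from $1$-extendability, which was shown in~\cite{FeMaWa24} to be $\Theta_2^p$-complete. For $k=1$ the two problems coincide and hardness is immediate, so from now on I assume $k \geq 2$. The key ingredient is the following lemma: for any graph $G$ on $n$ vertices and any integer $N \geq kn + 1$, the complete sum $G' = G + \overline{K_N}$ (where $\overline{K_N}$ denotes an independent set on $N$ fresh vertices) satisfies $\cext(G') \leq k$ if and only if $\cext(G) \leq k-1$.

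To prove the lemma I would first isolate the elementary observation that a complete sum $A + B$ of non-empty graphs is $1$-extendable if and only if $\alpha(A) = \alpha(B)$ and both $A$ and $B$ are $1$-extendable, because every maximum independent set of a join lies entirely on the side with the larger independence number. The easy direction of the lemma then follows by lifting any $1$-extendable $(k-1)$-partition $V(G) = S_2 \sqcup \cdots \sqcup S_k$ to a $k$-partition of $G'$: pair each $S_j$ with a subset of $\alpha(G[S_j])$ vertices from $\overline{K_N}$ (so that each such part is a balanced $1$-extendable join) and collect the leftover vertices of $\overline{K_N}$ as the first part, which is an independent set; the inequality $N \geq kn+1$ ensures this leftover is non-empty. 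For the hard direction, given a $1$-extendable $k$-partition with parts $V_i = S_i \sqcup U_i$ (where $S_i \subseteq V(G)$ and $U_i \subseteq \overline{K_N}$), pigeonhole yields some $|U_i| \geq \lceil N/k \rceil \geq n+1 > \alpha(G)$: the induced join $G[S_i] + \overline{K_{|U_i|}}$ then has its $\overline{K}$-side strictly dominating in independence number, so the elementary fact above forces $S_i = \emptyset$. The remaining $k-1$ parts cover $V(G)$ and, by the same fact, each induces a join that is $1$-extendable only when its $G$-side is $1$-extendable, yielding a $1$-extendable $(k-1)$-partition of $G$.

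Iterating the lemma then gives the reduction for every fixed $k \geq 2$: starting from $G_0 = G$, define $G_j = G_{j-1} + \overline{K_{N_j}}$ with $N_j = (j+1)\,|V(G_{j-1})| + 1$. The lemma applied at step $j$ gives $\cext(G_j) \leq j+1$ if and only if $\cext(G_{j-1}) \leq j$, so after $k-1$ iterations $\cext(G_{k-1}) \leq k$ if and only if $G$ is $1$-extendable, and $|V(G_{k-1})|$ remains polynomial in $n$ for fixed $k$. The main obstacle is the hard direction of the lemma, where one needs the size gap $|U_i| > \alpha(G)$ to rule out ``mixed'' parts that could otherwise split $V(G)$ across several $1$-extendable joins; the threshold $N \geq kn+1$ is calibrated precisely to make pigeonhole force this gap, and the iterative choice of $N_j$ is what guarantees the gap survives at every iteration.
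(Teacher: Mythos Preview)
Your proof is correct and follows essentially the same approach as the paper: reduce from \textsc{$1$-Extendability} by taking the complete sum with a large independent set (of size $kn+1$) and using pigeonhole to force one color class to miss $V(G)$ entirely, then iterate/induct on $k$. One simplification you missed: in the easy direction the paper simply takes all of $\overline{K_N}$ as a single new color class rather than pairing each $S_j$ with $\alpha(G[S_j])$ vertices of $\overline{K_N}$; your balanced-join construction works too but is more elaborate than necessary.
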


\begin{proof}
We prove the result by induction on $k$. The result is true for $k=1$, as mentioned previously. Let $k > 1$, and assume that the problem of finding a 1-extendable $(k-1)$-partition is NP-hard. Let $G$ be a graph on $n$ vertices. We define $G' = G+S$ where
$S$ is a graph composed of a unique independent set of $kn+1$ new vertices. We call $V'$ the set of vertices of $G'$. Since $G'$ has $(k+1)n+1$ vertices, its size is polynomial in the size of $G$.

If $\cext(G)\leqslant k-1$, then one can construct a $1$-extendable $k$-partition of $G'$ by using a whole new color for $S$. Thus, $\cext(G') \leqslant k$.

Conversely, if $\cext(G') \leqslant k$, let $(V_i')_{1\leqslant i \leqslant k}$ be a partition such that $G'[V_i']$ is 1-extendable for any $1\leqslant i\leqslant k$. Let $V_i = V\cap V_i'$ and $I_i = V(S)\cap V_i'$  for all $1\leqslant i \leqslant k$. The $I_i$'s form a partition of $V(S)$ into $k$ sets, and we assume without loss of generality that $I_k$ has maximum cardinality. We have 
$$
|I_k| \geqslant \frac{|S|}{k} > n
$$
Now, assume that $V_k \neq \emptyset$. Since $G'[V_k']$ is 1-extendable, $G[V_k]+S[I_k]$ is 1-extendable and thus $\alpha(G[V_k]) = \alpha(S[I_k])> n$. However, $\alpha(G[V_k])\leqslant n$ since $G[V_k]$ is a subgraph of $G$, which leads to a contradiction. So $V_k = \emptyset$, and thus the color $k$ is not used on $G$.  
Since any $G[V_i]$ is 1-extendable for $1\leqslant i \leqslant k-1$, we obtain a 1-extendable $(k-1)$-partition of $G$.
\end{proof}

However, note that \textsc{1-Extendable Partition} does not appear to be in $\Theta_2^p$ due to the additional complexity of the partition problem. The lowest class in the polynomial hierarchy where the problem seems to lie is $\Delta_2^p = \text{NP}^{\text{NP}}$. It remains an open question whether the problem is complete for this class.

\subsection{Cographs}

We start with the simplest graphs in terms of modular decomposition: cographs. Due to the structure of their cliques and independent sets, cographs are a natural candidate for being easier instances of problems related to the notion of 1-extendability. Indeed, the recursive definition of cographs leads to a linear-time algorithm for determining the independence number. Similarly, we can also characterize 1-extendable cographs.

\begin{proposition}\label{OpCographs}
Let $G_1$ and $G_2$ be two cographs. 
\begin{itemize}
\item  $G_1\cup G_2$ is 1-extendable if, and only if, $G_1$ and $G_2$ are 1-extendable.
\item $G_1+ G_2$ is 1-extendable if, and only if, $G_1$ and $G_2$ are 1-extendable and $\alpha(G_1)=\alpha(G_2)$.
\end{itemize}
\end{proposition}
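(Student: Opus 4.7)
The plan is to analyze separately the structure of the maximum independent sets of $G_1\cup G_2$ and of $G_1+G_2$, and then to derive the characterization from the definition of $1$-extendability. Note that the hypothesis that $G_1$ and $G_2$ are cographs will not really be used: both equivalences hold for arbitrary graphs, which is natural since the proposition is intended to be applied recursively along a cotree.

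For the disjoint union, I would start from the basic observation that a subset $I\subseteq V(G_1\cup G_2)$ is independent if and only if $I\cap V(G_1)$ and $I\cap V(G_2)$ are independent in $G_1$ and $G_2$ respectively. From this I would deduce $\alpha(G_1\cup G_2)=\alpha(G_1)+\alpha(G_2)$, and that an independent set of this size must restrict to an MIS on each side. Then a vertex $v\in V(G_1)$ belongs to an MIS of $G_1\cup G_2$ if and only if it belongs to an MIS of $G_1$, since one can always complete with any MIS of $G_2$ (and symmetrically for $V(G_2)$). The equivalence follows immediately.

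For the complete sum, the key observation is that every vertex of $G_1$ is adjacent to every vertex of $G_2$, so any independent set of $G_1+G_2$ is entirely contained in $V(G_1)$ or in $V(G_2)$. Hence $\alpha(G_1+G_2)=\max(\alpha(G_1),\alpha(G_2))$, and the MISes of $G_1+G_2$ are exactly the MISes of $G_i$ for the indices $i$ achieving the maximum. I would then split into two cases: if $\alpha(G_1)\neq\alpha(G_2)$, say $\alpha(G_1)<\alpha(G_2)$, then no vertex of $G_1$ lies in any MIS of $G_1+G_2$, so $G_1+G_2$ is not $1$-extendable; while if $\alpha(G_1)=\alpha(G_2)$, then the set of MISes of $G_1+G_2$ is the (disjoint) union of the MISes of $G_1$ and the MISes of $G_2$, and $1$-extendability of the sum reduces exactly to $1$-extendability of each summand.

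There is no real obstacle here; the arguments are essentially immediate once one records the structure of independent sets under each operation. The only point requiring a small amount of care is to treat correctly the case of unequal independence numbers in the complete sum, which is precisely where the extra condition $\alpha(G_1)=\alpha(G_2)$ enters the statement.
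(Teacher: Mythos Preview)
Your proposal is correct and follows essentially the same approach as the paper: you identify $\alpha(G_1\cup G_2)=\alpha(G_1)+\alpha(G_2)$ and $\alpha(G_1+G_2)=\max(\alpha(G_1),\alpha(G_2))$, describe the MISes of each construction accordingly, and derive $1$-extendability in both directions from these descriptions. Your remark that the cograph hypothesis plays no role is also accurate; the paper's proof does not use it either.
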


\begin{proof}
First, given two graphs $G_1=(V_1,E_1)$ and $G_2=(V_2,E_2)$, note that $\alpha(G_1\cup G_2)=\alpha(G_1)+\alpha(G_2)$.

Assume that $G_1$ and $G_2$ are $1$-extendable, and let $v\in V$. Whithout loss of generality, we assume $v\in V_1$ and let $S_1$ be a maximum independent set of $G_1$ that contains $v$. In addition, let $S_2$ be any maximum independent set of $G_2$. Then, $S_1\cup S_2$ is an independent set of cardinality $\alpha_1+\alpha_2$ and thus $v$ is in an MIS of $G$.

Assume that $G_1$ or $G_2$ is not $1$-extendable. Whithout loss of generality, we assume that $G_1$ is not and let $v\in V_1$ be a vertex which is not in an MIS of $G_1$. By contradiction, if $G$ is $1$-extendable, then $v$ is in an MIS $S$ of $G$. Let $S_1=S\cap V_1$ and $S_2=S\cap V_2$ and we notice that $S_2$ is an MIS of $G_2$ (otherwise it contradicts the maximality of $S$). It follows that $|S_1|= \alpha_1$ and thus $S_1$ is an MIS of $G_1$ that contains $v$, which is a contradiction.\newline

Secondly, note that $\alpha(G_1+ G_2)=\max(\alpha(G_1)+\alpha(G_2))$ and that any independent set of $G_1+G_2$ is either a subset of $V_1$ or $V_2$ since all edges between $V_1$ and $V_2$ are present.

Assume that $G_1$ and $G_2$ are $1$-extendable and $\alpha(G_1)=\alpha(G_2)$, and let $v\in V_1$. By $1$-extendability of $G_1$, $v$ is in an independent set of $G_1$ of size $\alpha(G_1)$, and thus in an MIS of $G$. The same holds if $v\in V_2$. 

Suppose that $G_1+G_2$ is $1$-extendable. If $\alpha(G_1)<\alpha(G_2)$, then any vertex of $G_1$ is not in an MIS of $G$ since any independent set that contains that vertex is a subset of $V_1$. Thus, $\alpha(G_1)=\alpha(G_2)$, and let $v\in V_1$. By $1$-extendability of $G_1+G_2$, $v$ is in an independent set $S$ of size $\alpha(G)=\alpha(G_1)$. Since $S$ is either a subset of $V_1$ or $V_2$ and $v\in V_1\cap S$, we have $S\subseteq V_1$. It follows that $v$ is in an MIS of $G_1$ and that $G_1$ is $1$-extendable. The same holds for $G_2$.
\end{proof}

This characterization also leads to a linear-time algorithm for determining whether a cograph is 1-extendable.
We now focus on the complexity of determining the 1-extendable chromatic number of a cograph. We use this characterization in order to decide, in $k \cdot \alpha(G)^{O(k)}$ time, whether a cograph $G$ has 1-extendable chromatic number at most $k$. Later, we will prove that this number is actually always bounded by $\log_2(\alpha(G))+1$, hence proving that it can be determined in quasi-polynomial time. We leave open the question whether the problem is polynomial-time solvable, and argue that this might be a more difficult question that it looks like, even in the very restricted case of complete multipartite graphs.

\begin{theorem}\label{QPcograph}
Deciding whether a cograph $G$ has 1-extendable chromatic number at most $k$ can be done in time $k \cdot \alpha(G)^{O(k)}$.
\end{theorem}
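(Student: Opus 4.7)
The plan is to design a bottom-up dynamic programming algorithm on the cotree of $G$. Recall that each internal node of the cotree is labeled as either a disjoint union ($\cup$) or a complete sum ($+$) of its children, and that Proposition~\ref{OpCographs} characterizes the $1$-extendability of $G_1 \cup G_2$ and $G_1 + G_2$ in terms of the $1$-extendability and independence numbers of $G_1$ and $G_2$. This characterization is exactly what allows the problem to be decomposed along the cotree.

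For each node $v$ of the cotree with associated subcograph $G_v$, I maintain a table $\Phi(v) \subseteq \{0, 1, \dots, \alpha(G)\}^k$ of \emph{achievable profiles}: a tuple $(a_1, \dots, a_k)$ belongs to $\Phi(v)$ if and only if there exists a partition $(V_1, \dots, V_k)$ of $V(G_v)$ such that each $G_v[V_i]$ is $1$-extendable and $\alpha(G_v[V_i]) = a_i$, with the convention that $a_i = 0$ encodes $V_i = \emptyset$. Then $\cext(G) \leqslant k$ if and only if $\Phi(\text{root})$ is non-empty, so it suffices to compute $\Phi$ at every cotree node.

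The transitions follow directly from Proposition~\ref{OpCographs}. For a leaf representing a single vertex, $\Phi(v) = \{e_1, \dots, e_k\}$, where $e_i$ is the $i$-th unit vector (one may place the vertex in any color). For a $\cup$-node with children $v_1, v_2$, since the disjoint union of two $1$-extendable graphs is $1$-extendable and independence numbers add under disjoint union, $\Phi(v)$ is the coordinate-wise (Minkowski) sumset of $\Phi(v_1)$ and $\Phi(v_2)$. For a $+$-node with children $v_1, v_2$, I iterate over every pair $(\mathbf{a}, \mathbf{b}) \in \Phi(v_1) \times \Phi(v_2)$ and construct a combined profile $\mathbf{c}$ as follows: for each color $i$, if both $a_i > 0$ and $b_i > 0$ the second item of Proposition~\ref{OpCographs} forces $a_i = b_i$ and sets $c_i = a_i$; if exactly one of $a_i, b_i$ is non-zero, $c_i$ equals that non-zero value; if both are zero, $c_i = 0$. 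Pairs violating the equality constraint in some color are discarded, and the remaining $\mathbf{c}$'s form $\Phi(v)$.

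For the complexity, $|\Phi(v)| \leqslant (\alpha(G) + 1)^k = \alpha(G)^{O(k)}$, and each merge at an internal node is dominated by the enumeration of pairs, giving $O\bigl(k \cdot \alpha(G)^{2k}\bigr) = k \cdot \alpha(G)^{O(k)}$ work per node; from this the claimed bound follows. The main obstacle is not the complexity accounting but rather verifying that the DP transitions exactly mirror Proposition~\ref{OpCographs}: combining two partitions whose profiles satisfy the stated conditions indeed yields a $1$-extendable partition of the combined cograph, and conversely any $1$-extendable $k$-partition of $G_v$ restricts via the Proposition to $1$-extendable partitions of the two children whose profiles recombine as above. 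Handling the degenerate case $a_i = 0$ consistently in both the $\cup$ and $+$ transitions is the technical subtlety that needs the most attention; once that is done, correctness follows by a straightforward induction on the height of the cotree.
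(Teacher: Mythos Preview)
Your approach is essentially identical to the paper's: the same bottom-up dynamic programming on the cotree, the same table of ``feasible $k$-tuples'' (your achievable profiles $\Phi$), the same base case at leaves, the same coordinate-wise Minkowski sum at $\cup$-nodes, and the same coordinate-wise matching with the zero convention at $+$-nodes (the paper writes this operation as $\cap_0$).

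The one place where your write-up falls short of the paper's is the running-time accounting. You bound the work \emph{per cotree node} by $k\cdot\alpha(G)^{O(k)}$ and then assert ``from this the claimed bound follows,'' but the cotree has $\Theta(n)$ nodes and $n$ is not bounded by any function of $\alpha(G)$ (a clique on $n$ vertices has $\alpha=1$), so a naive summation only gives $n\cdot k\cdot\alpha(G)^{O(k)}$. The paper closes this gap with an inductive argument on the cograph structure: at a $\cup$-node the independence numbers of the two children satisfy $\alpha_1+\alpha_2=\alpha$, so the recurrence $C(\alpha)\leqslant C(\alpha_1)+C(\alpha_2)+k(\alpha_1{+}1)^k(\alpha_2{+}1)^k$ can be shown to stay below $k(\alpha{+}1)^{2k}$. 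You should incorporate that argument (or an equivalent amortization) rather than leaving the per-node bound as the final word.
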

\begin{proof}
A $k$-tuple of non-negative integers $(\alpha_1,...,\alpha_k)$ is \emph{feasible} for a graph $G$ if $G$ admits a 1-extendable $k$-partition $V_1,...,V_k$ such that $\alpha(G[V_i])=\alpha_i$ for any $1\leqslant i \leqslant k$. Given a graph $G$, we call $S(G)$ the set of feasible $k$-tuples of $G$. Note that $\cext(G)\leqslant k$ if, and only if, $S(G)\neq \emptyset$. We give here an algorithm that computes $S(G)$ for a cograph.

If $G$ is a single vertex, then $S(G) = {(e_i)}_{1\leqslant i \leqslant k}$ where $e_i$ is a $k$-tuple with a $1$ in the $i$-th position and $0$ elsewhere.

If $G=G_1+G_2$, then $(\alpha_1,...,\alpha_k)\in S(G)$ if, and only if, there exists $(\alpha_1^{(1)},...,\alpha_k^{(1)})\in S(G_1)$ and $(\alpha_1^{(2)},...,\alpha_k^{(2)})\in S(G_2)$ such that for any $1\leqslant i \leqslant k$, one of the three conditions is true 
\begin{itemize}
\item $\alpha_i^{(1)} = \alpha_i^{(2)} = \alpha_i$  ;
\item $\alpha_i^{(1)} = 0$ and $\alpha_i^{(2)}=\alpha_i$;
\item $\alpha_i^{(2)} = 0$ and $\alpha_i^{(1)}=\alpha_i$. 
\end{itemize}
The equivalence holds because if we consider a color class in a 1-extendable partition of a graph $G$ that intersects both $G_1$ and $G_2$, the independence numbers induced by both subgraphs must be the same, as stated in Proposition \ref{OpCographs}. Furthermore, if the color class does not intersect $G_1$, then the independence number of the color class restricted to $G_1$ is zero, and there are no restrictions on the independence number of the color class restricted to $G_2$. The same holds if the color class does not intersect $G_2$. We use the notation $\cap_0$ for this operation, \textit{i.e.} $S(G_1+G_2) = S(G_1)\cap_0S(G_2)$.

If $G=G_1\cup G_2$, then $S(G) = S(G_1)+S(G_2) = \{\overrightarrow{\alpha_1} + \overrightarrow{\alpha_2}, \overrightarrow{\alpha_1}\in S(G_1), \overrightarrow{\alpha_2}\in S(G_2)\}$, where addition is taken component by component. The result holds due to Proposition \ref{OpCographs} and the fact that if we consider the color class of a 1-extendable partition, denoted as $V'$, then the independence number of the whole graph $G$ restricted to $V'$ is equal to the sum of the independence numbers of $G_1$ and $G_2$ restricted to $V'$.

We now focus on the running time. At each step, we need to show that the set operations can be done efficiently. First observe that $S(G)$ can be stored using an array without any duplicates, and that $|S(G)|\leqslant (\alpha+1)^k$. 
With such an implementation, both $S(G_1)\cap_0 S(G_2)$ and $S(G_1)+S(G_2)$ can be computed in time $k|S(G_1)|\cdot |S(G_2)|$. In the first case, we can check for each pair $(\overrightarrow{\alpha_1},\overrightarrow{\alpha_2})\in S(G_1)\times S(G_2)$ if they satisfy the three properties of $S(G_1) \cap_0S(G_2)$ in time $k|S(G_1)|\cdot |S(G_2)|$. In the second case case, we can simply compute the pairwise sum of $S(G_1)$ and $S(G_2)$ in time $k|S(G_1)|\cdot |S(G_2)|$.

Let $C(\alpha)$ be the maximum number of operations of the algorithm on a cograph with independence number $\alpha$. We show by induction on the cograph $G$ that $C(\alpha) \leqslant k(\alpha+1)^{2k}$. 
If $G$ is a single vertex, the inequality holds, so we now focus on the disjoint union and complete sum. Assume $G=G_1\cup G_2$, and let $\alpha_1$ (resp. $\alpha_2$) be the independence number of $G_1$ (resp. $G_2$). Then the running time is at most:
\begin{eqnarray*}
C(\alpha) & \leqslant &  C(\alpha_1)+C(\alpha_2)+ k(\alpha_1+1)^k  (\alpha_2+1)^k \\
& \leqslant  & C(\alpha_1)+C(\alpha_2)+ k\left( \frac{\alpha}{2} + 1 \right)^{2k} \\
& \leqslant  & k(\alpha_1 +1)^{2k} + k(\alpha_2 +1)^{2k} + k\left( \frac{\alpha}{2} + 1 \right)^{2k} \\
& \leqslant  & 2k \left( \frac{\alpha}{2}+1 \right)^{2k} + k\left( \frac{\alpha}{2} + 1 \right)^{2k} \\
& \leqslant  & k (\alpha +1)^{2k}
\end{eqnarray*}

Where the second and fourth inequalities hold because $x_1^k + x_2^k \leqslant 2(\frac{x}{2})^k$ if $x = x_1+x_2$, and the last inequality holds whenever $k \geqslant 2$ and $\alpha \geqslant 2$ (if $k=1$ then one can decide in linear time, and $\alpha=1$ means that $G$ is a clique and thus the problem is obvious).
The case of a complete sum follows the same computations.
\end{proof}

\subsection{Generalization to bounded modular-width}

Here we show that the previous algorithm can be generalized to any graph class of bounded modular-width. In addition, we obtain an algorithm for testing whether a graph is 1-extendable which runs in single exponential time in the modular-width, and linear time in the number of vertices.

\subsubsection{Testing 1-extendablity}

Given a graph $G$, the \textsc{1-Extendability} problem asks whether $G$ is $1$-extendable. In any class of graphs where computing the independence number is polynomial, one can test $1$-extendability by verifying if $\alpha(G-N[v])= \alpha(G)-1$ for all $v\in V(G)$. Using this approach, testing $1$-extendability is FPT in classical parameters such as clique-width or tree-width. However, these algorithms are quadratic rather than linear. Here, we provide a linear FPT algorithm in modular-width, with a single exponential dependency on the parameter.
\medskip

The following three results extend Proposition \ref{OpCographs} to modular decomposition. Our approach is inspired by the algorithm for recognizing well-covered graphs which contain a few number of paths on four vertices~\cite{Klein2013}. The substitution case is of particular interest and is the main difference with the previous approach, where it is observed that if a graph is $1$-extendable, then its representative graph, which is a vertex-weighted graph where the weights correspond to the independence number of each strong module, is also $1$-extendable in a weighted sense. These results provide the foundation for developing a linear FPT algorithm based on the modular-width to effectively solve the \textsc{1-Extendability} problem.

Let $G$ be a graph together with its modular decomposition, and let $H$ be its representative graph (\textit{i.e.} the first node of the modular decomposition). We define the weighted representative graph $H_w$ \cite{Klein2013} by assigning to each node $t$ the weight $w(t) = \alpha(G[M_t])$, where $M_t$ is the corresponding module of $t$ in $G$. Observe that a weighted MIS of $H_w$ has weight $\alpha(G)$. A weighted graph is $1$-extendable if each vertex belongs to a maximum weighted independent set.

\begin{lemma}\label{LemmaModularDecomposition1}
If $G$ is $1$-extendable, then for each module $M$ of $G$, $G[M]$ is also $1$-extendable.
\end{lemma}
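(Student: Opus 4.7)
The statement reduces to the following: for every vertex $v\in M$, I must exhibit an MIS of $G[M]$ containing $v$. My plan is to take an MIS of $G$ witnessing the $1$-extendability at $v$ and argue that its restriction to $M$ already has maximum size inside $G[M]$, using a swap argument that exploits the module property.

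Concretely, I would fix $v\in M$ and, using the assumed $1$-extendability of $G$, pick an MIS $S$ of $G$ with $v\in S$. Set $S_M = S\cap M$ and $S_O = S\setminus M$. Clearly $S_M$ is independent in $G[M]$ and contains $v$, so it suffices to prove $|S_M|=\alpha(G[M])$. Suppose for contradiction there exists an independent set $T$ in $G[M]$ with $|T|>|S_M|$. The key observation is that every vertex $w\in S_O$ is non-adjacent to at least one vertex of $M$ (namely $v\in S_M$, since $S$ is independent); because $M$ is a module, $w$ must then be non-adjacent to \emph{all} of $M$. Consequently there is no edge between $S_O$ and $T$, so $S' := S_O \cup T$ is independent in $G$. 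But $|S'|=|S_O|+|T|>|S_O|+|S_M|=|S|=\alpha(G)$, contradicting maximality of $S$. Hence $S_M$ is an MIS of $G[M]$ containing $v$.

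The argument is quite short and the only real step that needs care is the \emph{swap}: one must justify that replacing $S_M$ by $T$ inside $S$ keeps independence. I expect this to be the main (very mild) obstacle, but it is handled exactly by the defining property of a module combined with the fact that $v\in S_M$ provides a non-neighbour inside $M$ for each outside vertex of $S$. Note that the case $S_M=\emptyset$ does not occur here because we chose $S$ to contain $v\in M$; had we started from an arbitrary vertex of $M$ with $S_M$ possibly empty, the same argument would still work since any $T\neq\emptyset$ would then immediately yield a strictly larger independent set via the module property applied to any fixed $u\in T$. No further ingredients (such as modular decomposition or the representative graph $H_w$) are needed for this direction; those will come in the converse/weighted statements that follow.
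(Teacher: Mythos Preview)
Your proof is correct and follows essentially the same approach as the paper: pick an MIS $S$ of $G$ through $v$, intersect with $M$, and use a swap argument via the module property to show $S\cap M$ is an MIS of $G[M]$. If anything, you spell out the swap step more carefully than the paper does (explicitly arguing that each $w\in S_O$ is non-adjacent to $v$ and hence to all of $M$), whereas the paper simply invokes ``by definition of a module''.
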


\begin{proof}
Suppose $G$ is $1$-extendable. Let $v$ be a vertex in $M$ and by $1$-extendability of $G$, there exists an MIS $I$ of $G$ that contains $v$. Let $I_M = I \cap M$ and we show by contradiction that $I_M$ is an MIS of $G[M]$.

If $I_M$ is not an MIS of $G[M]$, let $I_M'$ be such an MIS, and thus $|I_M'|>|I_M|$. By definition of a module, $I' = I \backslash I_M \cup I_M'$ is an independent set of $G$, and $|I'|>|I|$, contradicting the maximality of $I$. Therefore, $I_M$ is an MIS of $G[M]$ that contains $v$ and so $G[M]$ must be $1$-extendable.
\end{proof}

\begin{lemma}\label{LemmaModularDecomposition2}
If $G$ is $1$-extendable, then its weighted representative graph $H_w$ is $1$-extendable.
\end{lemma}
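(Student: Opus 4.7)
The plan is to use the $1$-extendability of $G$ directly to build, for an arbitrary vertex $v_i$ of $H$, a maximum weighted independent set of $H_w$ that contains $v_i$. The key combinatorial idea is that any MIS $J$ of $G$ projects down to an independent set of $H$ by recording which modules $J$ hits, and because each module $M_j$ contributes at most $\alpha(G[M_j]) = w(v_j)$ vertices to $J$, this projection is automatically of maximum weight.

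More concretely, here is how I would carry out the argument. Let $v_i$ be an arbitrary vertex of $H_w$, and let $M_i$ be the corresponding strong module of $G$. Pick any vertex $u \in M_i$. By $1$-extendability of $G$, there exists an MIS $J$ of $G$ with $u \in J$. Define
\[
I \;=\; \{\, v_j \in V(H) : J \cap M_j \neq \emptyset \,\}.
\]
Note that $v_i \in I$ since $u \in J \cap M_i$. Next, I would verify that $I$ is independent in $H$: if $v_j, v_k \in I$ with $v_jv_k \in E(H)$, then by the substitution structure every vertex of $M_j$ is adjacent to every vertex of $M_k$ in $G$, contradicting the independence of $J$ (whose intersection with each of $M_j$ and $M_k$ is non-empty).

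It then remains to show that $I$ has maximum weight in $H_w$. Since $J \cap M_j$ is an independent set of $G[M_j]$, we have $|J \cap M_j| \leqslant \alpha(G[M_j]) = w(v_j)$, so
\[
\alpha(G) \;=\; |J| \;=\; \sum_{v_j \in I} |J \cap M_j| \;\leqslant\; \sum_{v_j \in I} w(v_j).
\]
Combined with the observation, recalled just before the lemma, that every weighted independent set of $H_w$ has weight at most $\alpha(G)$ (take an MIS inside each selected module to realize the weight as an independent set of $G$), this forces equality. Hence $I$ is a maximum weighted independent set of $H_w$ containing $v_i$, and $H_w$ is $1$-extendable.

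I do not expect a serious obstacle here; the only thing to be careful about is justifying that the projection $I$ is actually independent in $H$, which uses precisely the modular structure (complete or empty bipartite pattern between modules), and the double inequality that pins down the weight of $I$ as exactly $\alpha(G)$.
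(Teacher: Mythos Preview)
Your proof is correct and follows essentially the same approach as the paper: pick a vertex in the module corresponding to $v_i$, take an MIS $J$ of $G$ through it, and project to $H$ by recording which modules $J$ meets. The only cosmetic difference is in certifying the weight: the paper argues directly (via a swap) that $|J\cap M_j|=\alpha(G[M_j])$ for each module hit, whereas you obtain the same conclusion by the sandwich $\alpha(G)=|J|\leqslant \sum_{v_j\in I} w(v_j)\leqslant \alpha(G)$.
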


\begin{proof}
Let $t$ be a vertex of $H_w$ and let $M_t$ be the strong module of $G$ represented by $t$. By Lemma \ref{LemmaModularDecomposition1}, all strong modules of $G$ are $1$-extendable.

Let $v\in M_t$ and by $1$-extendability of $G$, $v$ is in an MIS $I$ of $G$. Let $t_1,...,t_k$ be the vertices of $H_w$ such that $I\cap M_{t_i} \neq \emptyset$ for all $1\leqslant i \leqslant k$. Note that $|I\cap M_{t_i}| =\alpha(G[M_{t_i}])$ for any $1\leqslant i\leqslant k$, since otherwise we could swap $I\cap M_{t_i}$ with a maximum independent set of $G[M_{t_i}]$, contradicting the assumption that $I$ is an MIS of $G$. 

It follows that $T=\{t_1,...,t_k\}$ is an independent set of $H$ of weight $\alpha(G)$, and thus a weighted MIS of $H_w$. Finally, $H_w$ is a weighted $1$-extendable graph.
\end{proof}

\begin{theorem}
A graph $G$ is 1-extendable if and only if all its maximal strong modules are $1$-extendable and if its weighted representative graph is $1$-extendable.
\end{theorem}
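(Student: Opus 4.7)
The forward direction is immediate from the two preceding lemmas: Lemma~\ref{LemmaModularDecomposition1} yields $1$-extendability of every module (hence in particular of every maximal strong module), while Lemma~\ref{LemmaModularDecomposition2} gives $1$-extendability of $H_w$. So I would only spend a line on this direction and focus on the converse.

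For the reverse direction, assume every maximal strong module $M_{t_1}, \dots, M_{t_r}$ of $G$ is $1$-extendable and $H_w$ is $1$-extendable in the weighted sense. Let $v \in V(G)$ be arbitrary, and let $M_{t_i}$ be the (unique) maximal strong module containing $v$. The plan is to build a maximum independent set of $G$ containing $v$ in two stages. First, since $G[M_{t_i}]$ is $1$-extendable, pick an MIS $I_v$ of $G[M_{t_i}]$ with $v \in I_v$; note $|I_v| = w(t_i)$. Second, since $H_w$ is weighted $1$-extendable, choose a maximum weighted independent set $T$ of $H_w$ containing $t_i$. For every $t_j \in T \setminus \{t_i\}$, arbitrarily pick an MIS $I_j$ of $G[M_{t_j}]$ (well-defined because $G[M_{t_j}]$ is $1$-extendable, or simply because $G[M_{t_j}]$ is nonempty).

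I would then set
\[
I \;=\; I_v \;\cup\; \bigcup_{t_j \in T \setminus \{t_i\}} I_j .
\]
Two things need to be checked. First, $I$ is an independent set of $G$: within each $M_{t_j}$ this is by construction, and between distinct $M_{t_j}, M_{t_{j'}}$ with $t_j, t_{j'} \in T$, the fact that $T$ is independent in $H$ together with the module property implies no edges of $G$ run between $M_{t_j}$ and $M_{t_{j'}}$. Second, $|I| = \sum_{t_j \in T} w(t_j) = \alpha(G)$, using the author's observation that the weight of a maximum weighted independent set of $H_w$ equals $\alpha(G)$. Hence $I$ is an MIS of $G$ containing $v$, and since $v$ was arbitrary, $G$ is $1$-extendable.

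I do not expect a real obstacle here: the only non-trivial ingredient is recognizing that a weighted MIS of $H_w$ has weight exactly $\alpha(G)$, which has already been noted, and that the module property allows us to glue together local MIS's of different modules indexed by an independent set of $H$ into a global independent set of $G$ whose size matches the weighted optimum. If anything, the mild subtlety is to remember to use \emph{maximal} strong modules so that the representative graph $H$ is well-defined as the top node of the modular decomposition; this does not affect the argument because the gluing only needs $1$-extendability of those particular modules, exactly what the hypothesis provides.
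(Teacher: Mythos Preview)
Your proposal is correct and follows essentially the same approach as the paper's own proof: invoke the two preceding lemmas for the forward direction, and for the converse build an MIS of $G$ by combining a local MIS of the module containing $v$ with local MIS's of the other modules indexed by a maximum weighted independent set of $H_w$. If anything, your write-up is slightly more careful than the paper's, since you explicitly justify both that $I$ is independent (via the module property) and that $|I|=\alpha(G)$.
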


\begin{proof}
First, assume that $G$ is $1$-extendable, and let $M_t$ be any strong module of $G$. By Lemma \ref{LemmaModularDecomposition1}, $G[M_t]$ is also $1$-extendable. Then, by Lemma \ref{LemmaModularDecomposition2}, the weighted representative graph $H_w$ of $G$ is also $1$-extendable.

Conversely, suppose that all its maximal strong modules and $H_w$ are $1$-extendable. Let $v$ be any vertex in a strong module $M_t$ of $G$. Since $G[M_t]$ is $1$-extendable, there exists an MIS $I_t$ of $G[M_t]$ that contains $v$. Moreover, since $H_w$ is $1$-extendable, there exists an MIS ${t_1, \ldots, t_k}$ of $H_w$ that contains $t$. For each $i \in \{1, \ldots, k\}$ such that $t_i \neq t$, let $I_{t_i}$ be any MIS of $G[M_{t_i}]$. Then, the set $I = I_t \cup I_{t_1} \cup \cdots \cup I_{t_k}$ is an MIS of $G$ that contains $v$. Therefore, $G$ is $1$-extendable.
\end{proof}

\begin{theorem}\label{thm:test1ext-mw}
Deciding whether a graph $G$ is $1$-extendable can be done in time $2^{O(\mw(G))}\cdot n$.
\end{theorem}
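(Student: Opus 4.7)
The plan is to apply the previous characterization theorem recursively along the modular decomposition tree of $G$, processing nodes bottom-up.

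First, compute the modular decomposition of $G$ in linear time using the algorithm of~\cite{Habib}. This produces a rooted tree $T$ where each leaf corresponds to a vertex of $G$, and each internal node $t$ is labeled with a representative graph $H_t$ whose vertex set has size $k_t \leq \mw(G)$, corresponding to the children of $t$. Each internal node $t$ represents a strong module $M_t$ of $G$. Crucial observation: the sum $\sum_t k_t$ over all internal nodes is exactly the number of non-root nodes of $T$, which is $O(n)$.

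Now process $T$ bottom-up. At a leaf, the corresponding induced subgraph is a single vertex, which is trivially $1$-extendable, with independence number $1$. At an internal node $t$, assume inductively that for every child $c$ we have already computed a boolean $b_c \in \{\text{true}, \text{false}\}$ indicating whether $G[M_c]$ is $1$-extendable, together with the integer $\alpha(G[M_c])$. Assign these integers as weights to the vertices of $H_t$ to obtain the weighted representative graph $H_t^w$. By the preceding theorem, $G[M_t]$ is $1$-extendable if and only if $b_c$ is true for every child $c$ and $H_t^w$ is weighted $1$-extendable; moreover $\alpha(G[M_t])$ equals the weight of a weighted MIS of $H_t^w$. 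To decide weighted $1$-extendability of $H_t^w$ and simultaneously compute its weighted MIS value, enumerate all $2^{k_t}$ subsets of $V(H_t)$, keep only the independent ones, record the maximum weight $\omega_t$, and then for each vertex $v \in V(H_t)$ check whether some independent set of weight $\omega_t$ contains $v$. This runs in time $O(2^{k_t} \cdot k_t)$, which is $2^{O(\mw(G))}$ per node.

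Summing over the tree, the total running time is
\[
\sum_t O(2^{k_t} \cdot k_t) \;\leq\; 2^{O(\mw(G))} \cdot \sum_t k_t \;=\; 2^{O(\mw(G))} \cdot O(n),
\]
which establishes the claimed bound. The main subtlety is verifying that a brute force enumeration of independent sets of the representative graph really does decide weighted $1$-extendability in the required time, but since $|V(H_t)| \leq \mw(G)$ this is immediate; the correctness of the whole recursion rests entirely on the characterization theorem already proved just above. Computing the modular decomposition itself in linear time is a black-box result~\cite{Habib}.
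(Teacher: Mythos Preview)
Your proposal is correct and follows essentially the same approach as the paper's proof: a bottom-up traversal of the modular decomposition, computing at each node both the independence number and a boolean for $1$-extendability, with a brute-force enumeration of subsets on the (weighted) representative graph whenever the node is prime. The only cosmetic difference is that the paper treats the disjoint-union and complete-sum nodes separately via Proposition~\ref{OpCographs} (so they cost $O(1)$ rather than $O(2^{k_t}\,k_t)$), whereas you handle all internal nodes uniformly through the characterization theorem; since $k_t=2$ at those nodes anyway, this changes nothing in the asymptotics.
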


\begin{proof}
The algorithm operates recursively on the modular decomposition of graph G to determine if it is 1-extendable and returns its independence number. The algorithm follows these steps:
\begin{itemize}
\item If $G = G_1 \cup G_2$, then $G$ is 1-extendable if and only if $G_1$ and $G_2$ are both 1-extendable by Proposition \ref{OpCographs}. In addition, $\alpha(G) = \alpha(G_1) + \alpha(G_2)$.
\item If $G = G_1 + G_2$, then $G$ is 1-extendable if and only if $G_1$ and $G_2$ are both 1-extendable by Proposition \ref{OpCographs}, and $\alpha(G_1) = \alpha(G_2)$. In addition, $\alpha(G) = \max(\alpha(G_1), \alpha(G_2))$.
\item If $G = H[G_1, \dots, G_m]$ with $m \leqslant \mw(G)$, then we apply the brute-force algorithm on $H$ to check if $G$ is 1-extendable and return $\alpha(G)$ in time $2^m \leqslant 2^{\mw(G)}$.
\end{itemize}

Let $G$ be an $n$-vertex graph. The tree resulting from the modular decomposition of $G$ has $n$ leaves, and each internal node has a degree of at least 3 (except the root). Thus, the tree has at most $n$ internal nodes. Since the algorithm performs at most $2^{\mw(G)}$ operations on the nodes of the tree, the overall running time of the algorithm is at most $2^{\mw(G)} \cdot n$.
\end{proof}

\subsubsection{1-extendable partition}
We can use the modular decomposition to construct an algorithm that solves the \textsc{1-Extendable Partition} problem in polynomial time when parameterized by the number of parts and the modular-width of the input graph.

\begin{theorem}\label{ExpModDecomp}
Deciding whether a graph $G$ with $n$ vertices and independence number $\alpha$ has 1-extendable chromatic number at most $k$ can be done in time $\alpha^{O(\mw(G)k)}\cdot n$.
\end{theorem}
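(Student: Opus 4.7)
The plan is to generalize the dynamic programming procedure of Theorem~\ref{QPcograph}, computing the same object $S(G)$, namely the set of feasible $k$-tuples $(\alpha_1,\dots,\alpha_k)$ of nonnegative integers such that $G$ admits a $1$-extendable $k$-partition $V_1,\dots,V_k$ with $\alpha(G[V_i])=\alpha_i$ for every $i$. Then $\cext(G)\leqslant k$ if and only if $S(G)\neq \emptyset$. The proof of Theorem~\ref{QPcograph} already handles the leaf case ($G$ is a single vertex) and the two cograph operations (disjoint union and complete sum). It remains to describe the substitution step $G=H[G_1,\dots,G_m]$ with $m\leqslant \mw(G)$, and to bound the resulting running time.

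For the substitution step, assume we have recursively computed $S(G_1),\dots,S(G_m)$. A $1$-extendable $k$-partition of $G$ induces, for each $i$, a $1$-extendable $k$-partition of $G[V(G_i)]$ by Lemma~\ref{LemmaModularDecomposition1}, hence a tuple $\vec{\alpha}^{(i)}=(\alpha_{i,1},\dots,\alpha_{i,k})\in S(G_i)$ (where $\alpha_{i,c}=0$ encodes that color $c$ is not used inside $G_i$). Conversely, given a choice of one tuple $\vec{\alpha}^{(i)}\in S(G_i)$ for every $i$, the subgraph of $G$ induced by color $c$ is a substitution on the subgraph of $H$ restricted to $\{i:\alpha_{i,c}>0\}$, in which each piece is (by induction) $1$-extendable with independence number $\alpha_{i,c}$. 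By the preceding theorem characterizing $1$-extendability via the weighted representative graph, this color-$c$ subgraph is $1$-extendable if and only if the weighted graph $H_c$ consisting of $H$ restricted to $\{i:\alpha_{i,c}>0\}$, with weight $\alpha_{i,c}$ on vertex $i$, is weighted $1$-extendable. When this holds for all $c$, the resulting tuple in $S(G)$ has $c$-th coordinate equal to the weighted independence number of $H_c$. Thus:
\[
S(G) \;=\; \Bigl\{\bigl(\mathrm{wMIS}(H_1),\dots,\mathrm{wMIS}(H_k)\bigr) : \vec{\alpha}^{(i)}\in S(G_i),\ H_c\text{ weighted $1$-extendable for all } c\Bigr\}.
\]

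To implement this, iterate over the at most $\prod_{i=1}^m |S(G_i)| \leqslant (\alpha+1)^{mk}$ combinations of tuples. For each combination and each of the $k$ colors, verify weighted $1$-extendability of $H_c$ and compute its weighted independence number by brute force in time $2^{O(m)}$, since $H_c$ has at most $m\leqslant \mw(G)$ vertices. Store the resulting tuples in $S(G)$ as a deduplicated array of size at most $(\alpha+1)^k$. The cost at a single substitution node is therefore $(\alpha+1)^{mk}\cdot k \cdot 2^{O(m)} = \alpha^{O(\mw(G)\,k)}$.

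Finally, the modular decomposition tree has at most $n$ internal nodes (since every internal node has at least two children and there are $n$ leaves), and can be computed in linear time~\cite{Habib}. Summing the work over all internal nodes yields the claimed total running time $\alpha^{O(\mw(G)\,k)}\cdot n$. The main delicate point, which is the only place where the cograph argument has to be reworked, is the correctness of the substitution step, and it is handled cleanly by the weighted representative graph characterization of $1$-extendability.
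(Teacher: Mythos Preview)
Your proposal is correct and follows essentially the same approach as the paper: both compute $S(G)$ bottom-up along the modular decomposition, handle leaves and the two cograph operations as in Theorem~\ref{QPcograph}, and treat the substitution node $G=H[G_1,\dots,G_m]$ by enumerating $\prod_i S(G_i)$ and, for each colour, checking weighted $1$-extendability of the representative graph by brute force in $2^{O(m)}$ time. Your explicit restriction of $H_c$ to the vertices with $\alpha_{i,c}>0$ is in fact a cleaner formulation than the paper's, which leaves the treatment of weight-$0$ vertices implicit; otherwise the arguments and running-time bounds coincide.
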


\begin{proof}
The algorithm is an extension of the algorithm of Theorem \ref{QPcograph} with the substitution case of the modular decomposition. We recall that $S(G)$ is the set of feasible $k$-tuples of $G$ (see Theorem \ref{QPcograph}). We give here an algorithm that computes $S(G)$ for any graph $G$.

\begin{itemize}
\item If $G$ is a single vertex $S(G) = \{(e_i)\}_{1\leqslant i \leqslant k}$.
\item If $G=G_1\cup G_2$, then $S(G)=S(G_1)+S(G_2)$.
\item If $G=G_2+G_2$, then $S(G)=S(G_1)\cap_0 S(G_2)$.
\item If $G=H[G_1,...,G_m]$ where $H$ is the representative graph. We try to compute $S(G)$ from $S(G_1),...,S(G_m)$. We have $(\alpha_1,...,\alpha_k)\in S(G)$ if, and only if, for any $1\leqslant j \leqslant m$, there exists $(\alpha_1^{(j)},\dots,\alpha_k^{(j)}) \in S(G_j)$ such that for any $1\leqslant i \leqslant k$, the weighted representative graph $H_{w_i}$ with weights $(\alpha_i^{(1)},\dots,\alpha_i^{(m)})$ is 1-extendable and $\alpha(H_{w_i}) = \alpha_i$.
\end{itemize}

We now focus on the running time. We show that, at any iteration of the algorithm, the number of operations is at most $2^{\mw(G)}(\alpha+1)^{k\mw(G)}$, where $\alpha$ is the independence number of the current graph $G$.  
\begin{itemize}
\item If $G$ is a single vertex, then the result holds.
\item If $G=G_1\cup G_2$, the running time is 
$$
2k|S_1|\cdot |S_2| \leqslant \left( \frac{\alpha}{2}+1\right)^{2k} \leqslant 2^{\mw(G)}\cdot (\alpha+1)^{\mw(G)k} 
$$
since $\mw(G) \geqslant 2$.
\item If $G=G_1+G_2$, the case is similar to the previous one. 
\item If $G=H[G_1,\dots,G_m]$, we assume $S(G_1),\dots,S(G_m)$ already computed. To compute $S(G)$, we need to enumerate $S(G_1)\times\dots\times S(G_m)$ and for each color class check if the weighted representative graph is 1-extendable using the brute-force algorithm in time $2^m$. Thus, the running time $C$ is at most:
$$
C \leqslant \left(\prod_{i=1}^m |S(G_i)|\right)\cdot  k\cdot 2^{m}
$$
However, we can bound $m$ by $\mw(G)$ and let $\alpha_i = \alpha(G_i)$. Since $G_i$ is an induced subgraph of $G$, we have $\alpha_i \leqslant \alpha$. It follows that:
\begin{eqnarray*}
C & \leqslant &  \left(\prod_{i=1}^m (\alpha_i+1)^k\right)\cdot  k\cdot 2^{m} \\
& \leqslant  & \left(\prod_{i=1}^m (\alpha+1)^k\right)\cdot   k\cdot 2^{m} \\
& =  & \left(\alpha +1\right)^{mk} \cdot k \cdot 2^{\mw(G)} \\
& \leqslant  & (\alpha+1)^{\mw(G)k} \cdot k \cdot 2^{\mw(G)} \\
\end{eqnarray*}
\end{itemize}

Finally, let $G$ be the input graph. Note that the decomposition tree has at most $n-1$ internal nodes, and that the current graph has modular-width at most $\mw(G)$ at each iteration of the algorithm. Thus the overall running time is at most 
$$
(\alpha+1)^{\mw(G)k} 2^{\mw(G)}k\cdot n
$$
\end{proof}

\subsection{Complete multipartite graphs}

We continue our exploration of the partition into $1$-extendable induced subgraphs by focusing on a specific class of cographs, namely the complete multipartite graphs. Although being a very restricted case, it appears that the problem within this graph class establishes connections with other fields, making it inherently interesting. We first begin with the following observation.

\begin{proposition}
A complete multipartite graph is 1-extendable if, and only if, it is balanced.
\end{proposition}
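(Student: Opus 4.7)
The plan is to reduce the statement to a direct description of the maximum independent sets of a complete multipartite graph. Let $G$ be a complete multipartite graph with parts $V_1, \ldots, V_q$, and set $n_i = |V_i|$.

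First I would observe that an independent set of $G$ is necessarily contained in one single part. Indeed, since every pair of vertices lying in distinct parts is joined by an edge, any independent set $I$ can intersect at most one $V_i$. It follows that the maximum independent sets of $G$ are exactly the parts $V_i$ of maximum cardinality, and in particular $\alpha(G) = \max_{1 \leqslant i \leqslant q} n_i$.

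Next I would use this characterization to argue both implications at once. A vertex $v \in V_i$ is in a MIS if and only if the only independent set of size $\alpha(G)$ containing $v$—namely $V_i$ itself—has size $\alpha(G)$, i.e.\ $n_i = \max_j n_j$. Consequently, every vertex of $G$ lies in some MIS if and only if $n_i = \max_j n_j$ for every $i$, which is exactly the definition of $G$ being balanced.

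The argument is short and the only real step is the observation that independent sets of a complete multipartite graph are confined to a single part; no genuine obstacle arises.
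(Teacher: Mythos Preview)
Your argument is correct. The only thing I would clean up is the phrasing ``the only independent set of size $\alpha(G)$ containing $v$---namely $V_i$ itself---has size $\alpha(G)$'', which is circular as written; what you mean is simply that the largest independent set containing $v$ is $V_i$, so $v$ lies in an MIS if and only if $|V_i|=\alpha(G)$.

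Your route differs from the paper's. The paper does not argue directly from the structure of independent sets; instead it invokes Proposition~\ref{OpCographs}, observing that a complete multipartite graph is obtained by iterated complete sums of independent sets, and that a complete sum $G_1+G_2$ is $1$-extendable precisely when both summands are $1$-extendable and $\alpha(G_1)=\alpha(G_2)$. Applying this repeatedly forces all parts to have equal size. Your approach is more elementary and self-contained, requiring no prior results about cographs, while the paper's approach illustrates how the proposition fits into the broader cograph machinery already developed. Both are equally short; yours would stand alone outside the paper's framework.
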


\begin{proof}
Let $G$ be a complete multipartite graph with parts $V_1,...,V_m$. $G$ is a cograph obtained with $m$ consecutive joins between independent sets. By Proposition \ref{OpCographs}, $G$ is $1$-extendable if and only if all parts have same cardinality.
\end{proof}

With that previous proposition, partitioning a complete multipartite graph into 1-extendable induced subgraphs is exactly the same as partitioning into balanced complete multipartite subgraphs. 
An example is illustrated in~Figure~\ref{FigMultiparti}.

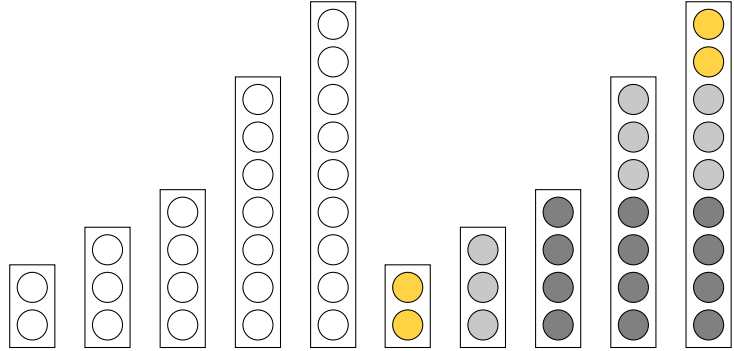
\begin{figure}[t]
\centering
\begin{minipage}{0.4\linewidth}
\centering
\begin{tikzpicture}

\foreach \j in {0,1}{
	\draw (0,\j/2) circle (0.2);}
\foreach \j in {0,...,2}{
	\draw (1,\j/2) circle (0.2);}
\foreach \j in {0,...,3}{
	\draw (2,\j/2) circle (0.2);}
\foreach \j in {0,...,6}{
	\draw (3,\j/2) circle (0.2);}
\foreach \j in {0,...,8}{
	\draw (4,\j/2) circle (0.2);}

\draw (-0.3,-0.3) rectangle (0.3,0.8);
\draw (1-0.3,-0.3) rectangle (1+0.3,1.3);
\draw (2-0.3,-0.3) rectangle (2+0.3,1.8);
\draw (3-0.3,-0.3) rectangle (3+0.3,3.3);
\draw (4-0.3,-0.3) rectangle (4+0.3,4.3);
\end{tikzpicture}
\end{minipage}
\begin{minipage}{0.4\linewidth}
\centering
\begin{tikzpicture}
\foreach \j in {0,1}{
	\draw[fill=lipicsYellow] (0,\j/2) circle (0.2);
	\draw[fill=lipicsYellow] (4,\j/2+7/2) circle (0.2);}

\foreach \j in {0,...,2}{
	\draw[fill=lipicsLightGray] (1,\j/2) circle (0.2);
	\draw[fill=lipicsLightGray] (3,{\j/2+2}) circle (0.2);
	\draw[fill=lipicsLightGray] (4,{\j/2+2}) circle (0.2);}

\foreach \j in {0,...,3}{
	\draw[fill=lipicsGray] (2,\j/2) circle (0.2);
	\draw[fill=lipicsGray] (3,\j/2) circle (0.2);
	\draw[fill=lipicsGray] (4,\j/2) circle (0.2);}

\draw (-0.3,-0.3) rectangle (0.3,0.8);
\draw (1-0.3,-0.3) rectangle (1+0.3,1.3);
\draw (2-0.3,-0.3) rectangle (2+0.3,1.8);
\draw (3-0.3,-0.3) rectangle (3+0.3,3.3);
\draw (4-0.3,-0.3) rectangle (4+0.3,4.3);
\end{tikzpicture}
\end{minipage}
\caption{A complete multipartite graph (left) and a $1$-extendable $3$-partition of it, represented as a coloring (right). For the sake of readability, the edges are not shown, the rectangles represent the partition into independent sets.} \label{FigMultiparti}
\end{figure}

\medskip

To avoid confusion, from now on the term \textit{part} is reserved for the elements of the (initial) partition of the complete multipartite graph, while we will use the term \textit{color} to refer to the elements of a $1$-extendable partition.

In a complete multipartite graph $G$ with partition $V_1$, $\dots$, $V_q$, every vertex of a same part has the same neighborhood, hence a 1-extendable partition $(C_1, \dots, C_k)$ in that case can actually be defined simply by giving, for each color $C_i$, its width together with the subset of parts of $G$ it intersects. Alternatively, one can also give the way each part $V_i$ is split by the 1-extendable partition. And since all vertices of a same part have the same neighborhood, it is actually sufficient to give the way $|V_i|$ is decomposed by the 1-extendable partition.
This observation leads to an equivalent\footnote{Naturally, in our graph-theoretical version, the integers are encoded in unary.} reformulation of the problem as a numbers problem. In the usual sense, a set $B$ of integers is a \textit{generating set} of a set $N$ of integers if every $p \in N$ can be written as a sum of elements in $B$. This problem is thus called \newsubsetsum.
\medskip

\noindent\fbox{
\begin{minipage}{.97\linewidth}
\newsubsetsum\\
\textbf{Input:} $m$ integers $n_1,...,n_m$ and an integer $k$.\\
\textbf{Question :}  Are there $k$ integers $\alpha_1,...,\alpha_k$ such that, for any $1\leqslant i \leqslant m$, there exists $J\subset [k]$ such that $\sum_{j\in J} a_j = n_i$ ?
\end{minipage}
}\newline

As observed previously, this problem is equivalent to the \textsc{1-Extendable Partition} problem in complete multipartite graphs. 
It appears that we are able to obtain a much simpler quasi-polynomial algorithm than in the case of cographs. Consider an instance $n_1,...,n_m,k$ of the \newsubsetsum problem where the input is encoded in unary. We call $\alpha = \max_{1\leqslant i \leqslant m} n_i$ and $n  =\sum_{i=1}^m n_i$. The algorithm lists all the $k$-tuples $(\alpha_1,...,\alpha_k) \in \{0,...,\alpha\}^k$, and solves, for each $i \in [m]$, the \textsc{Subset Sum} problem with numbers $(\alpha_1,...,\alpha_k)$ and target $n_i$ in time $O(k^2\alpha)$. The overall complexity of such an algorithm is thus $O(\alpha^{k+1} k^2)$. 
Now observe that we may always assume that $k \leqslant \lceil \log_2(\alpha) \rceil +1 $, since $\{2^i : 0 \leqslant i \leqslant \lceil \log_2(\alpha) \rceil\}$ is always a solution, by taking the binary representation of each integer $n_i$.

It turns out that a number-theoretic version of \newsubsetsum was already studied in the context of cryptanalysis, called \textsc{Hidden Subset Sum}~\cite{Boyko,Nguyen,Coron}. More precisely, a fast generator of random pairs $(x, g^x \mod p)$ was first presented~\cite{Boyko}, whose security relies on the potential hardness of  \textsc{Hidden Subset Sum}. However, an efficient lattice-based attack was quickly presented~\cite{Nguyen}, and further improved~\cite{Coron}. In \cite{collins2007nonnegative}, the problem was shown to be NP-complete. However, the existence of a (worst-case) polynomial-time algorithm for \newsubsetsum is still open, the input being encoded in unary.

\section{Extremal results}\label{sec:extremal}

In this section, we explore structural and extremal properties of 1-extendable partitions. Specifically, we show that $\cext(G) = O(\sqrt{n})$ for any $n$-vertex graph $G$. Then, we focus once again on cographs and demonstrate that $\cext$ is always logarithmic in the size of the independence number, and we provide two constructions that achieve this optimal bound. The first one is a complete multipartite graph and the second one is both an interval graph and a cograph. Together with the algorithm described in Section~\ref{sec:complexity}, it implies that \textsc{1-Extendable Partition} can be solved in quasi-polynomial time in cographs.

\subsection{General case}

\begin{lemma}
For any graph $G$, $\cext(G) \leqslant \alpha(G)$.
\end{lemma}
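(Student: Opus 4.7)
I would prove this by induction on $\alpha(G)$, using the set of vertices contained in some MIS as a ``1-extendable core''.

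For the base case $\alpha(G)=1$, the graph $G$ is a clique, and every clique is trivially 1-extendable (each singleton is an MIS), so $\cext(G)=1$. For the inductive step, assume the statement holds for all graphs with independence number strictly less than $\alpha(G)$. Define
\[
H = \{v \in V(G) : v \text{ belongs to at least one MIS of } G\}.
\]
I would first check that $G[H]$ is 1-extendable. Given $v \in H$, pick an MIS $I$ of $G$ containing $v$. Every vertex of $I$ lies in the MIS $I$, so $I \subseteq H$, meaning $I$ is an independent set in $G[H]$. Since $\alpha(G[H]) \leqslant \alpha(G) = |I|$, $I$ is in fact an MIS of $G[H]$, and it contains $v$.

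The next step is to argue $\alpha(G - H) \leqslant \alpha(G) - 1$. Any independent set $I'$ of $G - H$ is contained in $V(G) \setminus H$, i.e. consists only of vertices lying in no MIS of $G$. If we had $|I'| = \alpha(G)$, then $I'$ itself would be an MIS of $G$, forcing each of its vertices into $H$, a contradiction. Hence every independent set of $G - H$ has size at most $\alpha(G)-1$.

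By the induction hypothesis applied to $G - H$, we obtain $\cext(G - H) \leqslant \alpha(G - H) \leqslant \alpha(G) - 1$. Combining a 1-extendable $(\alpha(G)-1)$-partition of $G-H$ with the single part $H$ yields a 1-extendable $\alpha(G)$-partition of $G$, as desired. The only thing that requires a moment of care is the verification that $I \subseteq H$ (which is actually immediate from the definition of $H$) and the corresponding observation that $\alpha(G[H]) = \alpha(G)$; everything else is bookkeeping. No real obstacle arises, since the ``core'' $H$ is both 1-extendable and hits every MIS by construction.
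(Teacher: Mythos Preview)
Your proof is correct and follows essentially the same approach as the paper: both argue by induction on $\alpha(G)$, peel off the set of vertices lying in some MIS as a single 1-extendable part, and apply the induction hypothesis to the remainder whose independence number has dropped. Your write-up is in fact slightly more careful than the paper's, since you explicitly justify that $I\subseteq H$ and hence $\alpha(G[H])=\alpha(G)$.
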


\begin{proof}
We show that $\cext(G)\leqslant \alpha(G)$ by induction on $\alpha(G)$. If $\alpha(G)=1$, then $G$ is a clique and thus $G$ is $1$-extendable. We assume that the result is true for any graph $G$ such that $\alpha(G) \leqslant k$ for some $k\geqslant 1$. Let $G$ be a graph such that $\alpha(G) =k+1$, and let $V_{k+1}$ be the set of all vertices of $G$ that are in an MIS. We notice that $G[V_{k+1}]$ is $1$-extendable since each vertex is in an MIS by definition. In addition, $G[V(G) - S_{k+1}]$ has an independence number at most $k$ since all independent sets of size $k+1$ have been removed. By applying the induction hypothesis, $V(G) - V_{k+1}$ can be partioned into $k$ subsets $V_1,...,V_k$ (possibly empty) such that $G[V_i]$ is $1$-extendable for any $1\leqslant i \leqslant k$. By completing the partition with $V_{k+1}$, we obtain a desired partition for $V(G)$.
\end{proof}

The previous bound can be combined to obtain a square root upper bound in the order of a graph.

\begin{theorem}\label{thm:sqrtbound}
For any graph $G$ with $n$ vertices, 
$$
\cext(G) \leqslant 2\sqrt{n}
$$
\end{theorem}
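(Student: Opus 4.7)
The plan is to combine the bound $\cext(G) \leqslant \alpha(G)$ from the previous lemma with a simple greedy extraction argument, exploiting the fact that an independent set is trivially $1$-extendable. The intuition is that a large independence number helps one strategy (peeling off MISs, which are single colors), while a small independence number helps the other (applying the previous lemma), so balancing the two regimes at $\sqrt{n}$ yields the claimed bound.

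More precisely, I would proceed as follows. Set $t = \lfloor \sqrt{n} \rfloor$, let $G_0 = G$, and iteratively define $G_i = G_{i-1} - V_i$ where $V_i$ is a maximum independent set of $G_{i-1}$, as long as $\alpha(G_{i-1}) > t$. Let $k$ be the number of iterations performed, so that $\alpha(G_k) \leqslant t$. Each $V_i$ is an independent set, hence $1$-extendable, so these account for $k$ colors of a $1$-extendable partition. By the previous lemma applied to $G_k$, the remaining vertices admit a $1$-extendable partition using at most $\alpha(G_k) \leqslant t$ additional colors. Putting everything together yields $\cext(G) \leqslant k + t$.

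It remains to bound $k$. Since each extracted set satisfies $|V_i| = \alpha(G_{i-1}) \geqslant t+1$ and the $V_i$'s are pairwise disjoint subsets of $V(G)$, we get $k(t+1) \leqslant n$, hence $k \leqslant n/(t+1)$. As $t = \lfloor\sqrt{n}\rfloor$ gives $(t+1)^2 > n$, this implies $k < \sqrt{n}$. Combined with $t \leqslant \sqrt{n}$, we obtain $\cext(G) \leqslant k + t < 2\sqrt{n}$, as desired.

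The argument is essentially mechanical once the balancing idea is identified, so there is no deep obstacle; the only care needed is in handling the floor/ceiling bookkeeping to ensure the strict inequality $(t+1)^2 > n$ and thus $k < \sqrt{n}$, and in noting the degenerate case where $G_k$ becomes empty before the independence-number threshold is reached (in which case no extra colors are needed and the bound is only easier to achieve).
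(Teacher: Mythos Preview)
Your proposal is correct and follows essentially the same approach as the paper: greedily peel off maximum independent sets while they are large (each serving as a single $1$-extendable color class), then apply the previous lemma $\cext \leqslant \alpha$ to the leftover graph whose independence number has dropped below $\sqrt{n}$. The only cosmetic difference is that the paper first runs the greedy MIS extraction to exhaustion and then splits the resulting sequence at the size threshold $\sqrt{n}$, whereas you stop the extraction as soon as $\alpha$ drops below $t=\lfloor\sqrt{n}\rfloor$; since the extracted set at each step has size exactly equal to the current independence number, the two formulations coincide.
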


\begin{proof}
Let $G$ be a graph. We consider the following greedy coloring of $G$. Select an MIS $S_1$ of $G$ and repeat with $G-S_1$. We obtain a partition $S_1,...,S_k$ of $G$ where each $G[S_i]$ is an independent set. In addition, we notice that for any $1\leqslant i<k$, $|S_i| \geqslant |S_{i+1}|$, and let $k_0$ be the largest integer $i$ such that $|S_i|\geqslant \sqrt{n}$. We notice that $k_0\leqslant \sqrt{n}$, otherwise we would obtain strictly more than $n$ vertices in $S_1\cup ...\cup S_{k_0}$. Let $V_1 = \bigcup_{i : |S_i|\geqslant \sqrt{n}} S_i$, and $V_2=V\backslash V_1$. First, we have $\chi(G[V_1]) \leqslant k_0 \leqslant \sqrt{n}$. Then, we have $\alpha(G[V_2]) < \sqrt{n}$, since if there is a stable $S\subseteq V_2$ such that $|S|\geqslant \sqrt{n}$, it would have been selected instead of $S_{k_0+1}$ in the greedy algorithm since $|S_{k_0+1}|<\sqrt{n}$.

By coloring separatly $G[V_1]$ and $G[V_2]$, we obtain a $1$-extendable coloring of $G$ using at most $2\sqrt{n}$ colors. 
\end{proof}

\subsection{Cographs}

\subsubsection{Upper bound}

In this section, we prove that the 1-extendable chromatic number of a cograph is at most logarithmic in the size of its independence number. The general idea of the proof is to extract a 1-extendable induced subgraph while reducing the independence number of the remaining graph by a factor of 2. 

We first need a technical lemma regarding the partition of integers. It will be used in the disjoint union case, where we need to ensure that the extracted 1-extendable induced subgraph have the same independence number in both parts, in order to choose the right value for this independence number.

\begin{lemma}\label{DecompositionLemma}
For any $\alpha_1,\alpha_2\geqslant 0$ and $k \in \{0,...,\alpha_1+\alpha_2\}$, there exist $k_1,k_2\geqslant 0$ such that :
\begin{itemize}
\item $k_1\leqslant \alpha_1$ and $k_2\leqslant \alpha_2$ ;
\item $k_1+k_2=k$ ;
\item $\max(k_1-1,\alpha_1-k_1)+\max(k_2-1,\alpha_2-k_2)\leqslant \max(k-1,\alpha_1+\alpha_2-k)$.
\end{itemize}
\end{lemma}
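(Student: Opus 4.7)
The plan is to rephrase the third condition in a cleaner, more symmetric form and then proceed by a parity-based case split. Setting $f_i(x) = \max(x-1, \alpha_i - x)$ and $f(x) = \max(x-1, \alpha - x)$ with $\alpha = \alpha_1 + \alpha_2$, the target inequality reads $f_1(k_1) + f_2(k_2) \leq f(k)$. Using $\max(y-1, \alpha_i - y) = y + \max(-1, \alpha_i - 2y)$ in each term and subtracting $k_1 + k_2 = k$ from both sides, the condition rewrites as
\[
\max(-1, \alpha_1 - 2k_1) + \max(-1, \alpha_2 - 2k_2) \;\leq\; \max(-1, \alpha - 2k),
\]
which naturally splits along the sign of $\alpha - 2k + 1$.

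In the \emph{small-$k$ case} $k \leq \lceil \alpha/2 \rceil$, the right-hand side equals $\alpha - 2k$. I would pick integers $k_1, k_2$ summing to $k$ with $k_i \leq \lceil \alpha_i / 2 \rceil$ for $i=1,2$, so that each quantity $\alpha_i - 2k_i \geq -1$ and the two $\max$ terms on the left collapse to $\alpha_i - 2k_i$; the inequality becomes the identity $\alpha - 2k = \alpha - 2k$. Existence of such $(k_1, k_2)$ follows from the elementary bound $\lceil \alpha_1/2 \rceil + \lceil \alpha_2/2 \rceil \geq \lceil \alpha/2 \rceil \geq k$ combined with $0 \leq \lceil \alpha_i/2 \rceil \leq \alpha_i$.

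In the \emph{large-$k$ case} $k \geq \lceil (\alpha+1)/2 \rceil$, the right-hand side equals $-1$. My first attempt is to pick $k_i \geq \lceil (\alpha_i + 1)/2 \rceil$ for both $i$, which forces each $\max$ term on the left to be $-1$, giving a left-hand side of $-2 \leq -1$. This is feasible whenever $k \geq \lceil(\alpha_1+1)/2\rceil + \lceil(\alpha_2+1)/2\rceil$. A short parity check shows that the only remaining possibility is the boundary $k = \lceil(\alpha+1)/2\rceil$ together with at least one of $\alpha_1, \alpha_2$ being even; assuming WLOG that $\alpha_1$ is even (the sub-case $\alpha_2 = 0$ being trivial), I would set $k_1 = \alpha_1/2$ and $k_2 = k - \alpha_1/2$. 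A direct computation using the parity of $\alpha$ shows $k_2 = \lceil(\alpha_2+1)/2\rceil$, so the left-hand side becomes $\max(-1, 0) + \max(-1, \alpha_2 - 2k_2) = 0 + (-1) = -1$, matching the right-hand side.

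The main obstacle is the parity bookkeeping in the large-$k$ boundary sub-case: one must verify that the chosen $k_i$ lies in $[0, \alpha_i]$ and that the cases indexed by the parities of $\alpha_1$, $\alpha_2$ (and hence of $\alpha$) all behave correctly. This is elementary but somewhat tedious, and constitutes essentially the only non-routine part of the proof; the rest is a direct rewrite and one elementary inequality on ceilings.
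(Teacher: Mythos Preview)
Your proof is correct and follows essentially the same approach as the paper: a small-$k$/large-$k$ split followed by a parity-driven boundary case, with constructions that are symmetric to the paper's (in its critical case, where both $\alpha_i$ are even, the paper chooses $k_1=\alpha_1/2+1$, $k_2=\alpha_2/2$, which is the mirror of your choice). Your initial rewrite $\max(y-1,\alpha_i-y)=y+\max(-1,\alpha_i-2y)$ is a clean simplification not present in the paper, but the case structure and the underlying argument are the same.
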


\begin{proof} Let $\alpha = \alpha_1+\alpha_2$. We consider a first case when:
$$
k\geqslant \left\lceil  \frac{\alpha_1+1}{2}\right\rceil + \left\lceil  \frac{\alpha_2+1}{2} \right\rceil =: k_0^+
$$
We consider the set 
$$
S^+ = \left\lbrace k_1+k_2 \mid \left\lceil  \frac{\alpha_1+1}{2}\right\rceil  \leqslant k_1 \leqslant \alpha_1, \left\lceil  \frac{\alpha_2+1}{2}\right\rceil  \leqslant k_2 \leqslant \alpha_2\right\rbrace 
$$ 
and we note that $S^+$ is exactly the interval $
\{ k_0^+, \dots, \alpha_1+\alpha_2 \}
$ and thus contains $k$. By taking the couple $k_1,k_2$ which sums to $k$ in $S^+$, we have $k_1+k_2=k$ and $\max(k_{\ell}-1,\alpha_{\ell}-k_{\ell}) = k_{\ell} -1$ for $\ell \in \{1,2\}$, and $k_1-1+k_2-1 = k-2\leqslant \max(k-1,\alpha-k)$.
We can make a similar proof when: 
$$
k\leqslant \left\lfloor  \frac{\alpha_1+1}{2}\right\rfloor + \left\lfloor  \frac{\alpha_2+1}{2} \right\rfloor =: k_0^-
$$
by taking 
$$
S^- = \left\lbrace k_1+k_2 \mid 0 \leqslant k_1 \leqslant \left\lfloor  \frac{\alpha_1+1}{2}\right\rfloor, 0 \leqslant k_2 \leqslant \left\lfloor  \frac{\alpha_2+1}{2}\right\rfloor\right\rbrace
$$ 
The only critical case happens when $k_0^- < k < k_0^+$, which happens if, and only if, $\alpha_1$ and $\alpha_2$ are both even. In this case, we have $k =  k_0^- + 1 = k_0^- -1$, and thus $k= \alpha_1/2+\alpha_2/2 + 1 = \alpha/2 +1$. By taking $k_1 = \alpha_1/2+1$ and $k_2=\alpha_2/2$, we have have $\max(k_1-1, \alpha_1-k_1)= \alpha_1/2$ and $\max(k_2-1, \alpha_2-k_2) = \alpha_2/2$. Then, $\alpha_1/2+\alpha_2/2 = \alpha/2 \leqslant \max(k-1, \alpha-k)$ which concludes the proof.
\end{proof}

The next lemma is the key ingredient of our upper bound.

\begin{lemma}\label{TheoremCograph}
For any cograph $G$ and any $k\in \{0,...,\alpha(G)\}$, there exists a partition of the vertices into two subsets $V_1$ and $V_2$ such that 
\begin{itemize}
\item $G[V_1]$ is 1-extendable ;
\item $\alpha(G[V_1])=k$ ;
\item  $\alpha(G[V_2]) \leqslant \max(k-1,\alpha(G)-k)$.
\end{itemize}
\end{lemma}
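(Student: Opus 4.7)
\medskip

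\noindent\textbf{Plan.} The proof will go by induction on $|V(G)|$, exploiting the recursive cograph structure given by the cotree. For the base case $|V(G)|=1$, we simply set $V_1 = V(G), V_2 = \emptyset$ if $k=1$, and $V_1 = \emptyset, V_2 = V(G)$ if $k=0$; both satisfy the three conditions trivially. For the inductive step, $G$ decomposes as $G_1 \cup G_2$ or $G_1 + G_2$ with $G_1, G_2$ nontrivial cographs of strictly smaller order, and we split into these two cases.

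\medskip

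\noindent\textbf{Disjoint union case.} When $G = G_1 \cup G_2$, we have $\alpha(G) = \alpha_1 + \alpha_2$ with $\alpha_i := \alpha(G_i)$. The plan is to invoke Lemma \ref{DecompositionLemma} with $\alpha_1, \alpha_2, k$ to obtain $k_1, k_2 \geqslant 0$ with $k_1 \leqslant \alpha_1$, $k_2 \leqslant \alpha_2$, $k_1 + k_2 = k$, and $\max(k_1 - 1, \alpha_1 - k_1) + \max(k_2 - 1, \alpha_2 - k_2) \leqslant \max(k-1, \alpha(G) - k)$. Apply the induction hypothesis to each $G_i$ with parameter $k_i$, obtaining partitions $(W_1^{(i)}, W_2^{(i)})$ of $V(G_i)$. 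Setting $V_1 = W_1^{(1)} \cup W_1^{(2)}$ and $V_2 = W_2^{(1)} \cup W_2^{(2)}$, the graph $G[V_1]$ is the disjoint union of two $1$-extendable cographs and hence is $1$-extendable by Proposition \ref{OpCographs}, with independence number $k_1 + k_2 = k$; and the third condition for $V_2$ follows directly from the inequality given by Lemma \ref{DecompositionLemma}.

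\medskip

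\noindent\textbf{Complete sum case.} When $G = G_1 + G_2$, assume without loss of generality $\alpha_1 \geqslant \alpha_2$, so $\alpha(G) = \alpha_1$. Here the main obstacle is Proposition \ref{OpCographs}'s constraint: for $G[V_1]$ to be $1$-extendable when $V_1$ meets both sides of the sum, the induced subgraphs on each side must have \emph{equal} independence number $k$. We therefore split according to whether $k \leqslant \alpha_2$. If $k \leqslant \alpha_2$, apply induction to both $G_1$ and $G_2$ with parameter $k$, obtaining partitions $(W_1^{(i)}, W_2^{(i)})$, and take $V_1 = W_1^{(1)} \cup W_1^{(2)}$, $V_2 = W_2^{(1)} \cup W_2^{(2)}$. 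Proposition \ref{OpCographs} yields that $G[V_1]$ is $1$-extendable with $\alpha(G[V_1]) = k$, and $\alpha(G[V_2]) = \max(\alpha(G_1[W_2^{(1)}]), \alpha(G_2[W_2^{(2)}])) \leqslant \max(k-1, \alpha_1 - k, \alpha_2 - k) = \max(k-1, \alpha(G)-k)$ using $\alpha_2 \leqslant \alpha_1$. If instead $k > \alpha_2$, we put all of $V(G_2)$ into $V_2$: apply induction to $G_1$ only with parameter $k$ (valid since $k \leqslant \alpha(G) = \alpha_1$) to obtain $(W_1, W_2)$, and set $V_1 = W_1$, $V_2 = W_2 \cup V(G_2)$. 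Then $G[V_1] = G_1[W_1]$ is $1$-extendable with the right independence number, and $\alpha(G[V_2]) = \max(\alpha(G_1[W_2]), \alpha_2) \leqslant \max(k-1, \alpha_1 - k, \alpha_2)$, which equals $\max(k-1, \alpha(G) - k)$ since $\alpha_2 \leqslant k-1$.

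\medskip

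\noindent\textbf{Expected difficulty.} The genuinely delicate point is the disjoint union case, where the technical Lemma \ref{DecompositionLemma} is precisely tailored to produce the splitting $(k_1, k_2)$ that simultaneously respects the upper bounds $k_i \leqslant \alpha_i$ and the additive bound on the second-part independence numbers. The complete sum case looks more intricate at first glance because of the equal-$\alpha$ constraint, but the two-case analysis above handles it cleanly once one notices that $k > \alpha_2$ forces $\alpha_2 \leqslant k-1$, which is exactly what is needed to absorb $\alpha_2$ into the $\max(k-1, \alpha(G)-k)$ bound.
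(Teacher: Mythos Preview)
Your proposal is correct and follows essentially the same approach as the paper's proof: structural induction on the cotree, using Lemma~\ref{DecompositionLemma} to split $k$ in the disjoint-union case, and the same two-subcase analysis ($k \leqslant \alpha_2$ versus $k > \alpha_2$) in the complete-sum case. The only cosmetic difference is that you phrase the induction as being on $|V(G)|$ and make the $k=0$ base case explicit, whereas the paper says ``structural induction'' and leaves the base case implicit.
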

\begin{proof}
We prove the result by structural induction on cographs. If the cograph is a single vertex $v$, then the property trivially holds. 

If $G=G_1 \cup G_2$. Let $k\in \{0,...,\alpha(G)\}$ and notice that $\alpha(G) = \alpha(G_1)+\alpha(G_2)$.
By induction hypothesis on $G_i$ ($i\in \{1,2\}$), we can find a partition $(V_1^{(i)}, V_2^{(i)})$ of $V(G_i)$ such that
\begin{itemize}
\item $G_i[V_1^{(i)}]$ is 1-extendable ;
\item $\alpha(G_i[V_1^{(i)}])=k_i$ ;
\item $\alpha(G_i[V_2^{(i)}]) \leqslant \max(k_i-1,\alpha_i-k_i)$.
\end{itemize}
where $k_1$ and $k_2$ are obtained using Lemma \ref{DecompositionLemma}. Let $V_1= V_1^{(1)} \cup V_1^{(2)}$ and $V_2= V_2^{(1)} \cup V_2^{(2)}$. Since $G_1[V_1^{(1)}]$ and $G_2[V_1^{(2)}]$ are 1-extendable, $G[V_1]$ is also 1-extendable. In addition, 
$$
\alpha(G[V_1]) = \alpha(G_1[V_1^{(1)}]) + \alpha(G_2[V_1^{(2)}])=k_1+k_2=k
$$
Then, 
\begin{align*}
\alpha(G[V_2]) &= \alpha(G_1[V_2^{(1)}])+  \alpha(G_2[V_2^{(2)}])  \\
&\leqslant \max(k_1-1,\alpha_1-k_1) + \max(k_2-1,\alpha_2-k_2) \\
&\leqslant \max(k-1,\alpha-k)
\end{align*}

The remaining case is $G=G_1+G_2$. Let $k\in \{0,...,\alpha(G)\}$, where $\alpha(G) = \max(\alpha_1,\alpha_2)$. Without loss of generality, we assume $\alpha_1\geqslant \alpha_2$. We distinguish two cases. 

\noindent First, we assume that $k\leqslant \alpha_2$, and we apply the induction hypothesis on $G_1$ and $G_2$ with $k$, and get a partition $(V_1^{(i)}, V_2^{(i)})$ of $V(G_i)$, $i \in \{1, 2\}$ as previously. The subgraph $G[V_1^{(1)} + V_1^{(2)}]$ is 1-extendable of independence number $k$.  Then,
\begin{align*}
\alpha(G[V_2^{(1)} + V_2^{(2)}]) &= \max(\alpha(G_1[V_2^{(1)}]),\alpha(G_2[V_2^{(2)}]) \\
&\leqslant \max(\max(k-1,\alpha_1-k), \max(k-1,\alpha_2-k))\\
&= \max(k-1, \alpha_1-k, \alpha_2-k) \\
&= \max(k-1, \max(\alpha_1,\alpha_2) -k ) \\
&= \max(k-1, \alpha -k)
\end{align*}
Secondly, we assume that $k>\alpha_2$. We apply the induction hypothesis on $G_1$ with $k$ and obtain a partition $(V_1^{(1)}, V_2^{(1)})$. Thus, $G[V_1^{(1)}]$ is 1-extendable with independence number $k$. Then, 
\begin{align*}
\alpha(G[V_1^{(2)}+V_2]) &= \max(\alpha(G_1[V_2^{(1)}]), \alpha(G_2))) \\
&\leqslant \max(\max(\alpha_1-k, k-1), k-1) \\
&= \max(\alpha-k,k-1)
\end{align*}
That ends the induction. 
\end{proof}
%

We derive the upper bound by recursively applying the previous lemma.

\begin{theorem}\label{LogarithmicBoundCographs}
For any cograph $G$, $\cext(G) \leqslant\log_2(\alpha(G))+1$.
\end{theorem}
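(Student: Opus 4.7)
The plan is to prove the bound by induction on $\alpha := \alpha(G)$, using Lemma~\ref{TheoremCograph} as a ``peeling'' operation. At each step I would use one fresh color to extract a $1$-extendable induced subgraph whose independence number is prescribed to be about half of $\alpha$; the lemma then guarantees that the leftover also has independence number roughly $\alpha/2$. Iterating therefore produces a logarithmic number of parts.

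For the base case $\alpha = 1$, the graph $G$ is a clique and hence $1$-extendable, so $\cext(G) = 1 \leqslant \log_2(1) + 1$. For the inductive step, I assume the bound for all cographs with strictly smaller independence number, and let $G$ be a cograph with $\alpha(G) = \alpha \geqslant 2$. I set $k := \lceil \alpha/2 \rceil \in \{1, \dots, \alpha\}$. A brief case analysis on the parity of $\alpha$ shows that with this choice
\[
\max(k-1,\, \alpha - k) \leqslant \lfloor \alpha/2 \rfloor \leqslant \alpha/2.
\]

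Applying Lemma~\ref{TheoremCograph} to $G$ with this value of $k$ produces a partition $(V_1, V_2)$ of $V(G)$ such that $G[V_1]$ is $1$-extendable and $\alpha(G[V_2]) \leqslant \alpha/2$. Since cographs are closed under induced subgraphs (easily seen by induction on the cotree), $G[V_2]$ is again a cograph, so the induction hypothesis applies whenever $V_2$ is non-empty (and if $V_2 = \emptyset$ then $\cext(G[V_2]) = 0$ by convention, which only strengthens the bound). In either case,
\[
\cext(G[V_2]) \leqslant \log_2\!\bigl(\alpha(G[V_2])\bigr) + 1 \leqslant \log_2(\alpha/2) + 1 = \log_2(\alpha).
\]
Assigning a single fresh color to the whole of $V_1$ and reusing the $1$-extendable partition of $V_2$ delivered by the induction yields a valid $1$-extendable partition of $G$, so
\[
\cext(G) \leqslant 1 + \cext(G[V_2]) \leqslant \log_2(\alpha) + 1,
\]
closing the induction.

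The genuine difficulty has already been absorbed by Lemma~\ref{TheoremCograph} (and the arithmetic Lemma~\ref{DecompositionLemma} on which it rests): the freedom to \emph{prescribe} the independence number of the extracted $1$-extendable piece is precisely what allows the remainder to be bounded by $\max(k-1, \alpha-k)$ rather than by the cruder $\alpha - k$. Once this flexibility is available, the induction above is a routine halving argument and I do not expect any further obstacle.
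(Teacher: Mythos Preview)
Your proposal is correct and follows essentially the same argument as the paper: induction on $\alpha(G)$, with the inductive step invoking Lemma~\ref{TheoremCograph} at $k=\lceil \alpha/2\rceil$ so that the leftover has independence number at most $\lfloor \alpha/2\rfloor$, then spending one color on $V_1$ and recursing on $V_2$. Your version is slightly more careful in making explicit the parity check for $\max(k-1,\alpha-k)$, the closure of cographs under induced subgraphs, and the degenerate case $V_2=\emptyset$, but the structure and the key idea are identical.
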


\begin{proof}
We show the result by induction on $\alpha(G)$.
If $\alpha(G)=1$, then the graph is a clique and thus may be colored using $1$ color.
If not, let $k= \lceil ~\alpha(G)/2 ~\rceil$ and using Lemma \ref{TheoremCograph}, we can partition the vertices of $V$ into two subsets $V_1$ and $V_2$ such that $G[V_1]$ is 1-extendable and
$$
\alpha(G[V_2]) \leqslant\left\lfloor \frac{\alpha(G)}{2} \right\rfloor
$$
We only need one color for $G[V_1]$, and we recursively color $G[V_2]$ using $\log_2(\lfloor \alpha /2\rfloor)+1 \leqslant \log_2(\alpha)$ colors. Altogether, we can color $G$ using $\log_2(\alpha)+1$ colors such that each color class induces a 1-extendable graph.
\end{proof}

The bound in the previous theorem is obtained through recursive calls that halve the independence number, resulting in a logarithmic number of colors. However, this may not always be the optimal choice. For example, if $G$ is an independent set, a better choice would be to use only one color for the entire graph. Nevertheless, as we will see in the next section, the bound obtained from Theorem \ref{LogarithmicBoundCographs} is structurally optimal, as there exist cographs for which at least a logarithmic number of colors is required.

Combining the previous result with the algorithm presented in Theorem~\ref{QPcograph} allows to obtain the following.

\begin{corollary}
There exists an algorithm solving \textsc{1-Extendable Partition} on cographs in quasi-polynomial time.
\end{corollary}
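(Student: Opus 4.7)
The plan is a direct combination of the two ingredients already established in the paper. Given a cograph $G$ on $n$ vertices, the input to \textsc{1-Extendable Partition} is a pair $(G,k)$ and I must decide whether $\cext(G)\leqslant k$. First I would observe that $\alpha(G)\leqslant n$ and that $\alpha(G)$ can be computed in linear time on the cotree of $G$ via the usual recursion $\alpha(G_1\cup G_2)=\alpha(G_1)+\alpha(G_2)$ and $\alpha(G_1+G_2)=\max(\alpha(G_1),\alpha(G_2))$.

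Next, by Theorem~\ref{LogarithmicBoundCographs}, we have $\cext(G)\leqslant \log_2(\alpha(G))+1\leqslant \log_2(n)+1$. Therefore, if the input value $k$ satisfies $k\geqslant \lceil\log_2(\alpha(G))\rceil+1$, the algorithm may answer \textsc{Yes} immediately. Otherwise, $k\leqslant \log_2(\alpha(G))+1$, and I simply run the algorithm from Theorem~\ref{QPcograph}, which decides whether $\cext(G)\leqslant k$ in time $k\cdot \alpha(G)^{O(k)}$.

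Plugging the bound on $k$ into this running time gives
\[
k\cdot \alpha(G)^{O(k)} \;\leqslant\; \bigl(\log_2(\alpha(G))+1\bigr)\cdot \alpha(G)^{O(\log_2(\alpha(G)))} \;=\; \alpha(G)^{O(\log_2(\alpha(G)))},
\]
which, since $\alpha(G)\leqslant n$, is at most $n^{O(\log_2 n)}=2^{O(\log^2 n)}$, i.e.\ quasi-polynomial in the size of the input.

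There is essentially no obstacle here: both Theorem~\ref{QPcograph} and Theorem~\ref{LogarithmicBoundCographs} do all the heavy lifting, and the proof is just the arithmetic observation that once $k$ is logarithmic in $\alpha(G)$, the running time $k\cdot\alpha(G)^{O(k)}$ collapses to $\alpha(G)^{O(\log_2\alpha(G))}$. The only minor subtlety worth spelling out in the write-up is that when the queried $k$ exceeds the structural upper bound, the answer is trivially positive, so we never actually need to run the algorithm of Theorem~\ref{QPcograph} with a non-logarithmic $k$.
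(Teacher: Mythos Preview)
Your proof is correct and follows essentially the same approach as the paper: use Theorem~\ref{LogarithmicBoundCographs} to either answer \textsc{Yes} immediately when $k$ exceeds the logarithmic bound, and otherwise invoke the algorithm of Theorem~\ref{QPcograph} with $k=O(\log\alpha(G))$, yielding a quasi-polynomial running time. The only cosmetic difference is that the paper phrases the threshold and final bound in terms of $n$ rather than $\alpha(G)$.
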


\begin{proof}
Using Theorem \ref{LogarithmicBoundCographs}, $\cext(G) \leqslant \log_2(n)+1$. Thus, the following algorithm solves the problem :
\begin{itemize}
\item if $k\geqslant \log_2(n)+1$, the instance is a yes-instance ;
\item if $k<\log_2(n)+1$ runs the previous algorithm in time $\mathcal{O}(n^{2k+1})$.
\end{itemize}

This algorithm solves \textsc{1-Extendable Partition} on cographs in time at most $\mathcal{O}(n^{2\log_2(n)+3})$, which is quasi-polynomial in $n$. 
\end{proof}

Observe that Theorem~\ref{LogarithmicBoundCographs} can be generalized to every graph whose vertex set can be partitioned into a bounded number of induced cographs. This is for instance the case of $P_4$-bipartite graphs which include distance-hereditary graphs, parity graphs, $P_4$-reducible graphs, and $P_4$-sparse graphs.

\subsubsection{Lower bound}

We show that the previous logarithmic bound is essentially tight by constructing two different families of cographs that achieve this bound.

\paragraph{Complete multipartite graph.}

We consider the graph $G_k$, where $k\geqslant 0$, which is the complete multipartite graph with parts $V_0,...,V_k$ and $|V_i|=2^i$ for any $0\leqslant i \leqslant k$.

\begin{theorem}\label{ExtremalCompleteMultipartite}
For any $k\geqslant 0$, $G_k$ has $2^{k+1}-1$ vertices and $\cext(G_k)\geqslant k+1$.
\end{theorem}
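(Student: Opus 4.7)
The vertex count is immediate since $\sum_{i=0}^{k}2^i = 2^{k+1}-1$. For the lower bound I would rely on the equivalence, established earlier in the paper, between 1-extendable $t$-partitions of a complete multipartite graph and \newsubsetsum instances on the part sizes. Applied to $G_k$, a 1-extendable $t$-partition is determined by positive integer widths $w_1,\ldots,w_t$ (the common intersection size of each color class with the parts it meets) together with, for every $i\in\{0,1,\ldots,k\}$, a subset $J_i\subseteq[t]$ realizing $\sum_{j\in J_i}w_j = |V_i| = 2^i$. It therefore suffices to prove that any multiset of positive integers whose subset sums cover $\{2^0,2^1,\ldots,2^k\}$ has cardinality at least $k+1$.

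The cleanest route I see is the following more general statement, proved by induction on $n$: if a multiset $W=(w_1,\ldots,w_t)$ of positive integers generates every term of a \emph{superincreasing} sequence $b_1<b_2<\cdots<b_n$ (meaning $b_i>\sum_{j<i}b_j$ for all $i\geqslant 2$), then $t\geqslant n$. The base case $n=1$ is immediate. For the inductive step, I would fix a subset $J_1\subseteq W$ summing to $b_1$, pick any $w^\star\in J_1$ (so $w^\star\leqslant b_1$), and delete it. For every $i\geqslant 2$, the representation of $b_i$ either avoids $w^\star$ or uses it, so $W\setminus\{w^\star\}$ generates some value $c_i\in\{b_i,b_i-w^\star\}$. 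The main verification is that $(c_2,\ldots,c_n)$ is still superincreasing: using $w^\star\leqslant b_1$ together with $c_j\leqslant b_j$, one obtains $c_i\geqslant b_i-b_1>\sum_{j=2}^{i-1}b_j\geqslant\sum_{j=2}^{i-1}c_j$, where the strict inequality is exactly the superincreasing hypothesis on $(b_j)$ rewritten as $b_i>b_1+\sum_{j=2}^{i-1}b_j$. The induction hypothesis then yields $t-1\geqslant n-1$.

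The sequence $b_i=2^{i-1}$ for $i=1,\ldots,k+1$ is superincreasing since $\sum_{j=1}^{i-1}2^{j-1}=2^{i-1}-1<2^{i-1}=b_i$, so applying the lemma with $n=k+1$ gives $\cext(G_k)\geqslant k+1$. The main obstacle I anticipate is identifying this right general statement: a direct induction on $k$ with targets $\{2^0,\ldots,2^k\}$ does not close, because after removing a generator of $1$ the residual widths are only required to cover values in $\{2^i,2^i-1\}$ rather than a shifted sequence of powers of two; the superincreasing formulation is precisely the invariant robust enough to absorb this shift.
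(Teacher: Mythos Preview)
Your proof is correct. The paper's argument is more direct and tailored to the specific family: assuming a hypothetical $k$-coloring with widths $\alpha_0,\ldots,\alpha_{k-1}$ ordered by the first part $V_i$ in which each color appears, it shows by induction on $i$ that $\alpha_i\leqslant 2^i$ (since a color first appearing in $V_i$ intersects $V_i$ in exactly its width), and hence $\sum_{i<k}\alpha_i\leqslant 2^k-1<|V_k|$, a contradiction. Your route instead abstracts to a general lemma about superincreasing target sets and inducts by deleting a generator rather than by scanning the parts in order; the invariant you isolate---that the residual targets $(c_2,\ldots,c_n)$ remain superincreasing after the shift by $w^\star\leqslant b_1$---is precisely what makes the induction close. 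Both arguments ultimately exploit the same inequality $b_i>\sum_{j<i}b_j$, but yours packages it so that the conclusion holds verbatim for any complete multipartite graph whose part sizes form a superincreasing sequence, not just powers of two, whereas the paper's version is a couple of lines shorter for this particular instance.
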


\begin{proof}
Let $G_k$ be a complete multipartite graph with parts sets $V_0,...,V_k$ and $|V_i|=2^i$ for any $0\leqslant i \leqslant k$. For any $k\geqslant 0$, we easily show that $G_k$ has $2^{k+1}-1$ vertices. By contradiction, assume that there exists some 1-extendable $k$-coloring of $G$ and let $\alpha_{0},\dots,\alpha_{k-1}$ be the width of each of the $k$ balanced complete multipartite graphs induced by each color class. In particular, we assume that the widths are sorted by apparition order of the corresponding colors in $V_0,\cdots V_k$. We show by induction on $i$ that $\alpha_i \leqslant 2^i$.
\begin{itemize}
\item $V_0$ is composed of a unique vertex $v$, and so the color class of $v$ has width at most $|V_0|=1$. Thus, we have $\alpha_0=1$.
\item We suppose $\alpha_j \leqslant 2^j$ for any $j\leqslant i-1$ for some $i\geqslant 1$. The set $V_i$ cannot be colored using only the $i-1$ first colors. Indeed, we have $\alpha_0+\dots +\alpha_{i-1} \leqslant 2^{i}-1$ by induction hypothesis, and $|V_i| = 2^i$. Thus, a new color is used for the set $V_i$, which implies that $\alpha_i\leqslant |V_i|=2^i$. 
\end{itemize}
Thus, we have $\alpha_0+...+\alpha_{k-1} \leqslant 2^{k}-1$, and so the last set $V_k$ cannot be entirely colored using only the $k$ colors, which is a contradiction. Finally, $\cext(G_k) \geqslant k+1$.
\end{proof}

\paragraph{Interval graph.}

A pertinent question to consider is whether there exist cographs that require a logarithmic number of colors without being complete multipartite. In this context, we present an example that shows the existence of such cographs. The construction will be based on the creation of a specific interval graph denoted as $G_k$ ($k\geqslant 1$), defined as follows:
\begin{itemize}
\item $G_1 = K_1$, i.e., the graph containing only a single vertex;
\item For any $i\geqslant 1$, $G_{k+1} = K_1 + (G_k \cup G_k)$, where $K_1$ denotes a single vertex and $(G_k \cup G_k)$ represents the disjoint union of two copies of $G_k$.
\end{itemize}
It can be observed that for any $k\geqslant 1$, in addition of being a cograph, the graph $G_k$ is an interval graph, as depicted in Figure \ref{IntervalGraph}. By immediate induction, it may be shown that $\alpha(G_k)=2^k-1$ for any $k$.

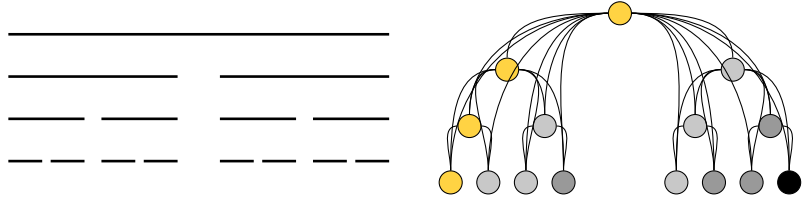
\begin{figure}[t]

\centering
\begin{minipage}{0.45\linewidth}
\centering
\begin{tikzpicture}[scale=0.45]
\draw [, line width=1pt](6.25,8.5) to (17.5,8.5);
\draw [, line width=1pt](6.25,7.25) to (11.25,7.25);
\draw [, line width=1pt](12.5,7.25) to (17.5,7.25);
\draw [, line width=1pt](6.25,6) to (8.5,6);
\draw [, line width=1pt](9,6) to (11.25,6);
\draw [, line width=1pt](12.5,6) to (14.75,6);
\draw [, line width=1pt](17.5,6) to (15.25,6);
\draw [, line width=1pt](8.5,4.75) to (7.5,4.75);
\draw [, line width=1pt](9,4.75) to (10,4.75);
\draw [, line width=1pt](11.25,4.75) to (10.25,4.75);
\draw [, line width=1pt](6.25,4.75) to (7.25,4.75);
\draw [, line width=1pt](12.5,4.75) to (13.5,4.75);
\draw [, line width=1pt](14.75,4.75) to (13.75,4.75);
\draw [, line width=1pt](15.25,4.75) to (16.25,4.75);
\draw [, line width=1pt](17.5,4.75) to (16.5,4.75);
\end{tikzpicture}
\end{minipage}
\begin{minipage}{0.45\linewidth}
\centering
\begin{tikzpicture}
\node[draw,circle, fill = lipicsYellow] (0) at (0,2.25)  {} ;
\node[draw,circle, fill = lipicsYellow] (1) at (-1.5,1.5) {} ;
\node[draw,circle, fill = lipicsLightGray] (2) at (1.5,1.5)  {} ;
\node[draw,circle, fill = lipicsYellow] (3) at (-2,0.75) {} ;
\node[draw,circle, fill = lipicsLightGray] (4) at (-1,0.75) {} ;
\node[draw,circle, fill = lipicsLightGray] (5) at (1,0.75) {} ;
\node[draw,circle, fill = lipicsGray!80] (6) at (2,0.75)  {} ;
\node[draw,circle, fill = lipicsYellow] (7) at (-2.25,0) {} ;
\node[draw,circle, fill = lipicsLightGray] (8) at (-1.75,0) {} ;
\node[draw,circle, fill = lipicsLightGray] (9) at (-1.25,0) {} ;
\node[draw,circle, fill = lipicsGray!80] (10) at (-0.75,0)  {} ;
\node[draw,circle, fill = lipicsLightGray] (11) at (0.75,0) {} ;
\node[draw,circle, fill = lipicsGray!80] (12) at (1.25,0) {} ;
\node[draw,circle, fill = lipicsGray!80] (13) at (1.75,0) {} ;
\node[draw,circle, fill = black] (14) at (2.25,0)  {} ;

\foreach \i in {1,3,4,7,8,9,10}{
	\draw[-, opacity = 0.5] (0) to[out=180,in=90] (\i) ;
};
\foreach \i in {2,5,6,11,12,13,14}{
	\draw[-, opacity = 0.5] (0) to[out=0,in=90] (\i) ;
};
\foreach \i in {3,7,8}{
	\draw[-, opacity = 0.5] (1) to[out=180,in=90] (\i) ;
};
\foreach \i in {4,9,10}{
	\draw[-, opacity = 0.5] (1) to[out=0,in=90] (\i) ;
};
\foreach \i in {5,11,12}{
	\draw[-, opacity = 0.5] (2) to[out=180,in=90] (\i) ;
};
\foreach \i in {6,13,14}{
	\draw[-, opacity = 0.5] (2) to[out=0,in=90] (\i) ;
};

\draw[-, opacity = 0.5] (3) to[out=180,in=90] (7) ;
\draw[-, opacity = 0.5] (3) to[out=0,in=90] (8) ;
\draw[-, opacity = 0.5] (4) to[out=180,in=90] (9) ;
\draw[-, opacity = 0.5] (4) to[out=0,in=90] (10) ;
\draw[-, opacity = 0.5] (5) to[out=180,in=90] (11) ;
\draw[-, opacity = 0.5] (5) to[out=0,in=90] (12) ;
\draw[-, opacity = 0.5] (6) to[out=180,in=90] (13) ;
\draw[-, opacity = 0.5] (6) to[out=0,in=90] (14) ;
\end{tikzpicture}
\end{minipage}
\caption{An interval representation of $G_4$, and a 1-extendable $4$-partition of it. In the partition, each color class induces a disjoint union of cliques, which is $1$-extendable.}\label{IntervalGraph}
\end{figure}

We start by a quite simple but yet useful lemma, saying that if a 1-extendable graph has a vertex linked to all others, the graph must be a clique.

\begin{lemma}
Let $G=(V,E)$ be a 1-extendable graph. If there exists $v\in V$ such that for any $u\in V\backslash \{v\}$, $vu\in E$, then $G$ is a clique.
\end{lemma}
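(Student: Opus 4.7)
The plan is to exploit the fact that $v$ being universal forces the independence number of $G$ to equal $1$, which combined with $1$-extendability will make the graph a clique.

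First, I would observe that any independent set containing $v$ must be $\{v\}$, since $v$ is adjacent to every other vertex. By $1$-extendability, $v$ lies in some maximum independent set $S$, hence $S = \{v\}$ and therefore $\alpha(G) = 1$.

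Next, I would apply $1$-extendability to an arbitrary vertex $u \in V$: there exists a maximum independent set containing $u$, but since $\alpha(G)=1$, this set must be $\{u\}$. In particular, this holds for every pair of distinct vertices $u_1, u_2 \in V$. If $u_1$ and $u_2$ were non-adjacent, then $\{u_1, u_2\}$ would be an independent set of size $2$, contradicting $\alpha(G)=1$. Hence every pair of vertices is adjacent and $G$ is a clique. The argument is direct, so there is no real obstacle beyond this short deduction.
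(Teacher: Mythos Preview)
Your proof is correct and follows essentially the same idea as the paper's: both hinge on the observation that a universal vertex can only lie in an independent set of size one, so $1$-extendability forces $\alpha(G)=1$ and hence $G$ is a clique. The paper phrases this as a contrapositive (if $G$ is not a clique then $\alpha(G)\geqslant 2$ and $v$ lies in no MIS), while you argue directly; also note that your second invocation of $1$-extendability for an arbitrary $u$ is superfluous, since $\alpha(G)=1$ alone already guarantees every pair of vertices is adjacent.
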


\begin{proof}
Assume that $G$ is not a clique and thus $\alpha(G) \geqslant 2$. Thus $v$ cannot be in an independent set of size at least $2$, otherwise it would be connected to the other vertices of the independent set.
\end{proof}

\begin{theorem}
For any $k\geqslant 0$, $\cext(G_k)\geqslant k$.
\end{theorem}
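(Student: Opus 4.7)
The plan is to prove the bound by induction on $k$, exploiting both the recursive structure of $G_k$ (which has a universal apex vertex $v_k$ joined to two disjoint copies of $G_{k-1}$) and the module structure inherited from this construction. The base case $k=1$ is trivial since $G_1$ is a single vertex and $\cext(G_1)=1$.

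For the inductive step, suppose we have a 1-extendable $r$-partition $V_1,\dots,V_r$ of $G_k$ and, without loss of generality, the apex $v_k$ lies in $V_1$. Because $v_k$ is adjacent to every other vertex of $G_k$, it is universal in the induced subgraph $G_k[V_1]$, which is 1-extendable by hypothesis; applying the preceding lemma (a 1-extendable graph with a universal vertex is a clique) forces $G_k[V_1]$ to be a clique. But $G_k - v_k$ is the disjoint union of two copies $G_{k-1}^{(1)}$ and $G_{k-1}^{(2)}$ of $G_{k-1}$, so any clique in $G_k-v_k$ must lie entirely in one of the two copies. Hence, after possibly relabeling, $V_1 \cap V(G_{k-1}^{(1)}) = \emptyset$, i.e., color $1$ is not used on the left copy.

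It then remains to show that the partition, restricted to $V(G_{k-1}^{(1)})$, is itself a 1-extendable partition of $G_{k-1}^{(1)} \cong G_{k-1}$ using at most $r-1$ colors. The key point is that $V(G_{k-1}^{(1)})$ is a module of $G_k$ (every outside vertex of $G_k$ is either $v_k$, adjacent to all of it, or lies in the other copy, adjacent to none of it). Intersecting a module with any subset of vertices yields a module of the corresponding induced subgraph, so for each $i\geqslant 2$ the set $V_i \cap V(G_{k-1}^{(1)})$ is a module of the 1-extendable graph $G_k[V_i]$; by Lemma~\ref{LemmaModularDecomposition1}, the induced subgraph on this intersection is 1-extendable. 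This gives a 1-extendable partition of $G_{k-1}$ with at most $r-1$ parts, so the inductive hypothesis yields $r-1 \geqslant k-1$, i.e. $r \geqslant k$, completing the induction.

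The main conceptual obstacle is the transfer from a 1-extendable coloring of $G_k$ down to one of $G_{k-1}$: a priori, restricting a 1-extendable graph to an arbitrary induced subgraph need not preserve 1-extendability. The crucial observation that unlocks the argument is that each copy of $G_{k-1}$ in $G_k$ is a module, which lets Lemma~\ref{LemmaModularDecomposition1} propagate 1-extendability from each color class in $G_k$ to its trace on $G_{k-1}^{(1)}$. Everything else is routine bookkeeping and an application of the clique lemma to pin down the color class of the apex.
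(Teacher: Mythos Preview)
Your proof is correct and follows essentially the same inductive scheme as the paper: locate the apex in some color class, use the clique lemma to confine that class to $\{v_k\}$ plus one copy, and then apply the induction hypothesis to the untouched copy. The only difference is in how you justify that the restricted coloring of $G_{k-1}^{(1)}$ is still a 1-extendable partition: you invoke the module structure and Lemma~\ref{LemmaModularDecomposition1}, whereas the paper (implicitly) relies on the simpler fact that for $i\neq 1$ the class $V_i$ avoids $v_k$, so $G_k[V_i]=G_{k-1}^{(1)}[V_i\cap V(G_{k-1}^{(1)})]\cup G_{k-1}^{(2)}[V_i\cap V(G_{k-1}^{(2)})]$ is a disjoint union, and Proposition~\ref{OpCographs} then gives 1-extendability of each piece. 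Both justifications are valid; yours is slightly heavier machinery but makes the transfer step more explicit than the paper does.
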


\begin{proof}
We prove the result by induction on $k$. For $k=1$, the result is trivial. We assume the result to be true for some $k$. By contradiction, suppose $\cext(G_{k+1}) \leqslant k$. We write $G_{k+1} = {v} + (H_1 \cup H_2)$, where $v$ is the vertex at the top of $G_{k+1}$, and $H_1$ and $H_2$ are two copies of $G_k$. Let $c : V(G_{k+1}) \rightarrow [k]$ be a 1-extendable $k$-coloring of $G_{k+1}$, and without loss of generality, assume that $c(v)=k$. Since $v$ is connected to every other vertex of $G_{k+1}$, its color class forms a clique. Hence, there exists some $i \in \{1,2\}$ such that $c(V(H_{i})) \subseteq [k-1]$, meaning that $c$ does not use the color $k$ on one of the copies of $G_k$. If it used the color $k$ on both copies, the color class of $v$ would not form a clique. Without loss of generality, we assume that the copy $H_1$ is the one not using the color $k$. Thus, $c(V(H_1))$ is a 1-extendable $(k-1)$-coloring of $G_k$, which leads to a contradiction. Hence, we conclude that $\cext(G_{k+1})\geqslant k+1$.
\end{proof}

Interestingly, this logarithmic lower bound is also tight for interval graphs, since every interval graph $G$ satisfies $\chi_s(G) \leqslant \log_2(n) +1$, \textit{i.e.} we can partition its vertex set into a logarithmic number of parts, each of them inducing a disjoint union of cliques~\cite{Broersma2002}.

\section{Conclusion and open questions} \label{sec:conclusion}

Motivated by practical applications in wireless networks, this study introduced the concept of the 1-extendable chromatic number and examined its structural and algorithmic properties.

Our primary contribution is a quasi-polynomial algorithm for determining this number in cographs, achieved by combining an algorithm exponential in the number of available colors with a logarithmic upper bound. The existence of a polynomial-time algorithm remains an open question, even in the very restricted case of complete multipartite graphs, where the problem intriguingly reformulates in number-theoretic terms. Additionally, we extended our algorithm to arbitrary graphs using modular decomposition, resulting in a polynomial-time algorithm when both the modular width and the number of parts are fixed. We also provided a linear FPT algorithm in terms of modular width to determine if an input graph is $1$-extendable, with a single exponential dependency on the parameter.

On the structural side, we demonstrated the existence of $n$-vertex graphs that require a logarithmic number of $1$-extendable parts. Moreover, we provided an upper bound of $O(\sqrt{n})$ colors, though closing the gap for the 1-extendable chromatic number remains a challenging task.

After examining arbitrary graphs through modular decomposition, the next logical step is to focus on restricted graph classes commonly used in wireless network contexts. Geometric graphs, such as (unit) disk graphs, present an intriguing class for further study of the 1-extendable chromatic number.

\begin{credits}
\subsubsection{\ackname} We thank Carl Feghali and Pierre Bergé for their valuable discussions on the subject.
\end{credits}

\newpage
\bibliographystyle{plainurl}
\bibliography{biblio.bib}

\end{document}